\xpatchcmd{\tkzTabLine}{$0$}{$\bullet$}{}{}
\tikzset{t style/.style={style=solid}}
\newtheorem{theorem}{Theorem}
\newtheorem{proposition}{Proposition}
\newtheorem{definition}{Definition} 
\theoremstyle{nonumberplain}
\newtheorem{proof}{Proof}
\title{A Morse-theoretical analysis of lensing in wormhole spacetimes}
\author{Mourad Halla \footnote{ZARM, University of Bremen,
              28359 Bremen, Germany,
              \text{mourad.halla@zarm.uni-bremen.de} }       \and
        Volker Perlick \footnote{ZARM, University of Bremen,
              28359 Bremen, Germany,
              \text{perlick@zarm.uni-bremen.de}}
\date{}
}
\begin{document}
\maketitle

\noindent
\textbf{Abstract.} We consider a class of stationary and axisymmetric wormhole spacetimes that is closely related to, but
not identical with, the class of Teo wormholes. We
fix a point $p$ (observation event) and a timelike
curve $\gamma$ (worldline of a light source), and we
characterize the set of all past-oriented lightlike
geodesics from $p$ to $\gamma$. As any such geodesic
corresponds to an image of the light source on the
observer's sky, this allows us to investigate the
lensing properties of the wormhole. As a main result,
we prove with the help of Morse theory that, under
very mild conditions on $\gamma$, the observer always
sees infinitely many images of $\gamma$.  Moreover,
we study some qualitative features of the lightlike
geodesics with the help of two potentials that determine
the sum of centrifugal and Coriolis forces of observers
in circular motion for the case that the observers'
velocity approaches the velocity of light. We exemplify
the general results with two specific wormhole
spacetimes.\\
\textbf{Keywords.} Wormholes, Morse theory, Gravitational lensing
\section{INTRODUCTION}
Wormholes are spacetime models where two asymptotically flat ends are connected by a \emph{throat}. Historically, the first wormhole that was found was the so-called Einstein-Rosen bridge \cite{EinsteinRosen1935} that occurs in the maximal analytical extension of the Schwarzschild metric. However, the Einstein-Rosen bridge is non-traversable, i.e. an observer cannot travel at subluminal velocity from one side of the throat to the other. A class of traversable wormholes, which permit the two-way travel of objects such as human beings through the throat, was introduced and discussed by Morris and Thorne \cite{MorrisThorne1988}. The Morris-Thorne wormhole metrics are spherically symmetric and static, and they require the existence of exotic matter in the following sense: Morris and Thorne have shown that, if such a wormhole metric is inserted into the left-hand side of Einstein's field equation, the energy-momentum tensor on the right-hand side necessarily violates the weak energy condition near the throat, i.e., the energy density becomes negative for some observers. Teo \cite{Teo:1998dp} extended the class of Morris-Thorne wormholes to a class of stationary and axisymmetric, i.e., rotating, wormholes. Unsurprisingly, also these more general wormhole metrics need some exotic matter if they are considered as solutions to Einstein's field equation; more precisely, they violate the null energy condition, as was demonstrated in the Teo paper. However, readers who are not willing to accept such exotic matter may view at least some traversable wormhole metrics as solutions to
alternative gravity theories without violating any of the energy conditions, see e.g. Bronnikov and Kim \cite{BronnikovKim2003} or Kanti et al. \cite{KantiEtAl2011}. Moreover, we mention the possibility of constructing \emph{dynamical} wormholes that satisfy the energy conditions on the basis of Einstein's field equation, see Maeda et al. \cite{dyn}.

Although until now there is no observational evidence for the
existence of wormholes in Nature, and although time-independent
traversable wormholes are allowed by Einstein's field equation
only in the presence of exotic matter, these spacetime models have found considerable theoretical interest. If traversable wormholes do exist in Nature, one could use them for travelling from one asymptotic end to the other or, if the two asymptotic ends are ``glued together'', for travelling from one point in the asymptotic region to another one by taking a shortcut through the throat. 

One way in which wormholes could be detected by observation is via their influence on light rays, i.e., via their lensing properties. In this paper we want to prove some generic lensing features of wormholes. To that end we consider a class of stationary and axisymmetric metrics that is closely related to, but not identical with, the Teo class of wormhole metrics. We will use Morse theory for demonstrating that, under very general conditions, in such a spacetime there are infinitely many past-oriented lightlike geodesics from an event $p$ to a timelike curve $\gamma$, i.e., that an observer at $p$ sees infinitely many images of a light source with worldline $\gamma$. Moreover, we will show that this property is related to centrifugal-plus-Coriolis force reversal in these spacetimes. Here it should be  emphasised that the geodesic equation in the considered class of wormhole spacetimes is not in general completely integrable, i.e., that we cannot determine the light rays by analytically calculating them. Even in the special cases where the geodesic equation is completely integrable, it would be quite awkward to analytically determine the lightlike geodesics from an observation event $p$ to a worldline $\gamma$ that is in wild motion. Morse theory provides us with a method of determining the number of such lightlike geodesics without actually solving the geodesic equation. Quite generally, Morse theory relates the number of critical points (i.e., minima, maxima or saddles) of a function to the topology of the manifold on which this function is defined, see e.g. Milnor \cite{Milnor1963}. This can be applied, in particular, to variational problems where the function is the variational action, the manifold is the space of trial maps and the critical points are the solutions to the variational problem. In our case the trial maps are the lightlike \emph{curves} joining a point $p$ and a timelike curve $\gamma$ in the spacetime $M$, and the solutions are the lightlike \emph{geodesics}. There are many versions of how Morse theory can be used for determining geodesics. In this paper we want to use a theorem by Uhlenbeck \cite{UHLENBECK1975} which characterises the lightlike geodesics in a globally hyperbolic spacetime with the help of Morse theory. The same approach has been applied already by Hasse and Perlick \cite{Hasse:2005vu} to the spacetime of a Kerr-Newman black hole and we will show here that, with a few modifications, the same methodology also applies to wormhole spacetimes. Uhlenbeck's theorem may be viewed as a general-relativistic version of Fermat's principle. For background material on Fermat's principle in general relativity we refer to Perlick \cite{Perlick2000,Perlick2004}. 

The paper is organised as follows. In Sec.\ref{sec:Morse} we briefly summarise the Morse-theoretical result by Uhlenbeck \cite{UHLENBECK1975} that we want to apply later. In Sec.\ref{sec:worm} we introduce the class of rotating traversable wormhole spacetimes that will be considered in the rest of the paper. Sec.\ref{sec:inertial} is concentrated on the notions of centrifugal and Coriolis forces in the rotating traversable wormhole spacetime; in particular, we introduce the potential $\Psi_+$ (respectively $\Psi_-$) which determines the sum of centrifugal and Coriolis force with respect to co-rotating (respectively counter-rotating) observers whose velocity approaches the velocity of light. In Sec.\ref{sec:multiple} we discuss multiple imaging in the rotating traversable wormhole spacetime with the help of Morse theory and the potentials $\Psi_{\pm}$. We exemplify the general results with two specific wormhole spacetimes in Secs.\ref{sec:ex1} and \ref{sec:ex2}. In the last Section we give concluding remarks about our results. 

Throughout this paper we use Greek letters $\mu, \nu ... = 0,...,3$ for spacetime indices and we use Latin letters $i,j,k...= 1, ... ,3$ for spatial components. The metric signature is $(-+++)$. We set the vacuum speed of light, $c$, equal to unity. 
\section{A RESULT FROM MORSE THEORY IN GLOBALLY HYPERBOLIC SPACETIMES}\label{sec:Morse}
In this section we summarise a result from Morse theory by Uhlenbeck \cite{UHLENBECK1975} that will then be applied to investigating the multiple imaging properties of a rotating traversable wormhole. By multiple imaging we mean the question of how many past-pointing lightlike geodesics from an event $p$ (observation event) to a timelike curve $\gamma$ (worldline of a light source) in a four-dimensional Lorentzian manifold $(M,g)$ exist. 

Uhlenbeck's result presupposes a globally hyperbolic spacetime. Proving a long-standing conjecture, Bernal and S{\'a}nchez \cite{BernalSanchez2003} have shown that a spacetime is globally hyperbolic if and only if it is diffeomorphic to a product manifold, 
\begin{equation}
M=\mathbb{R}\times\Sigma
\label{s1}
\end{equation}
where $\Sigma$ is a 3-dimensional manifold and each $\{ t \} \times \Sigma$
is a Cauchy hypersurface. With respect to this product structure the metric orthogonally splits into a spatial and a temporal part,
\begin{equation}
g=-f(x,t)dt^2+ H_{ij}(x,t)dx^i dx^j,
\label{splitting}
\end{equation}
where $t$ is the time coordinate given by projecting from $M=\mathbb{R} \times \Sigma$ onto the first factor and $x=(x^1,x^2,x^3)$ are coordinates on $\Sigma$. 

For the following we have to assume that we have a globally hyperbolic spacetime with an orthogonal splitting that satisfies the so-called \emph{metric growth condition} which was introduced by Uhlenbeck \cite{UHLENBECK1975}. By definition, this condition is fulfilled if and only if for every compact subset of $\Sigma$ there is a function $F$ that satisfies
\begin{equation}
\int_{-\infty}^0 \dfrac{dt}{F(t)}=\infty
\label{3}
\end{equation}
such that for $t \le 0$ the inequality 
\begin{equation}
H_{ij}(x,t)v^i v^j \le f(x,t)F(t)^2 G_{ij}(x)v^i v^j
\label{eq:metricgrowth}
\end{equation}
holds for all $x$ in the compact subset and for all $(v^1, v^2, v^3) \in \mathbb{R}^3$, with a time-independent Riemannian metric $G_{ij}$ on $\Sigma$.

Note that we have given here Uhlenbeck's metric growth condition in a time-reversed way; the reason is that Uhlenbeck wanted to determine future-oriented lightlike geodesics from a point $p$ to a timelike curve $\gamma$ whereas for applications to lensing we are interested in past-oriented lightlike geodesics from $p$ to $\gamma$. In this time-reversed version, the metric growth condition prohibits the existence of particle horizons, i.e., it guarantees that from each point $p$ in $M$ one can find a past-pointing lightlike curve to every timelike curve that is vertical with respect to the orthogonal splitting chosen.

For the following we also need the notions of conjugate points and of Betti numbers. 

Recall that a point $q$ is said to be \emph{conjugate} to a point $p$ along a geodesic $\lambda$ if there exists a non-zero Jacobi field (i.e., a non-trivial geodesic variation) along $\lambda$ which vanishes at $p$ and $q$. Such Jacobi fields form a vector space and the dimension of this vector space is called the \emph{multiplicity} of the conjugate point. For lightlike geodesics, multiples of the tangent field have to be factored out of the space of Jacobi fields. The (Morse) \emph{index} of a given geodesic that starts at $p$ is the number of points $q$ conjugate to $p$, counting multiplicities.  The set of all points that are conjugate to a given point $p$, along any (past-pointing) lightlike geodesic, is called the (past) \emph{caustic} of $p$.  

Given a topological manifold $\mathcal{M}$, the $\kappa$th \emph{Betti number} $B_{\kappa}$ of $\mathcal{M}$ is the dimension of the $\kappa$th homology space of $\mathcal{M}$ with coefficients in $\mathbb{R}$. Descriptively, $B_0$ counts the connected components of $\mathcal{M}$ and $B_{\kappa}$, for $\kappa>0$, counts those holes in $\mathcal{M}$ that prevent a $\kappa$-dimensional sphere from being a boundary. In the case we are interested in, $\mathcal{M}$ will be the loop space $L(M)$ of the spacetime manifold $M$, i.e., the set of all continuous maps from the circle $S^1$ to $M$ that go through a fixed point $p$ in $M$. We assume, of course, that the spacetime manifold $M$ is connected; then $L(M)$ is independent of which point $p$ we have chosen.  

We are now ready for stating Uhlenbeck's theorem. 
\begin{theorem}
Let $(M,g)$ be a globally hyperbolic spacetime  that admits an orthogonal splitting \eqref{s1} and \eqref{splitting} which satisfies the metric growth condition. Fix a point $p \in M$ and a smooth timelike curve $\gamma: \mathbb{R}\to M$ that takes the form $\gamma(\tau)=(\beta(\tau),\tau)$ with respect to the chosen splitting. Assume that $\gamma$ does not meet the caustic of the past light-cone of $p$ and that for some sequence $(\tau_i)_{i\in \mathbb{N}}$ with $\tau_i \to -\infty$ the sequence $(\beta(\tau_i))_{i \in \mathbb{N}}$ converges in $\Sigma$. Then the Morse inequalities
\begin{equation}
N_{\kappa} \ge B_{\kappa} \;\text{for all}\; {\kappa} \in \mathbb{N}_0
\label{5}
\end{equation}
and the Morse relation 
\begin{equation}
\sum_{{\kappa}=0}^{\infty}(-1)^{\kappa} N_{\kappa}=
\sum_{{\kappa}=0}^{\infty}(-1)^{\kappa} B_{\kappa}
\label{6}
\end{equation}
hold, 
where $N_{\kappa}$ denotes the number of past-pointing lightlike geodesics with index ${\kappa}$ from $p$ to $\gamma$, and $B_{\kappa}$ denotes the ${\kappa}$th Betti number of the loop space of $M$.
\label{theorem}
\end{theorem}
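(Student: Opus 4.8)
The statement is, up to a time reversal, Uhlenbeck's theorem from \cite{UHLENBECK1975}, so the plan is to reduce it to her result and then supply the few modifications needed here. First I would apply the diffeomorphism $(t,x)\mapsto(-t,x)$ of $M=\mathbb{R}\times\Sigma$. This interchanges past- and future-pointing lightlike geodesics from $p$ to $\gamma$, leaves the orthogonal splitting \eqref{splitting} intact (with $f$ and $H_{ij}$ reflected in $t$), carries the vertical curve $\gamma(\tau)=(\beta(\tau),\tau)$ into a future-directed vertical curve, and converts the pair \eqref{3}--\eqref{eq:metricgrowth} into exactly the original, future-directed form of the metric growth condition used by Uhlenbeck. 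The Morse index counts conjugate points with multiplicity, with the tangent direction factored out, and is insensitive to time orientation, so it is unchanged; hence the numbers $N_\kappa$ here agree with Uhlenbeck's.

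It then remains to recall the variational set-up behind Uhlenbeck's theorem and to identify the relevant path space with $L(M)$. Because of the lightlike constraint $f\dot T^2=H_{ij}\dot c^i\dot c^j$, a past-pointing lightlike curve $\sigma(s)=(T(s),c(s))$ from $p$ is fixed by its spatial projection $c$ together with the (prescribed) initial time, while the requirement that $\sigma$ end on $\gamma$ couples the spatial endpoint $c(1)$ to the arrival time $T(1)$; so the set $\mathcal{L}_{p,\gamma}$ of trial curves is modelled on a Hilbert manifold of $H^1$-curves in $\Sigma$ issuing from the spatial position of $p$, and on it the arrival-time functional $\mathcal{T}$ obtained by integrating the lightlike constraint has the lightlike geodesics from $p$ to $\gamma$ as its critical points --- the Fermat principle underlying \cite{UHLENBECK1975}, see also \cite{Perlick2000,Perlick2004}. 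I would then verify that the hypothesis ``$\gamma$ does not meet the caustic of the past light-cone of $p$'' is equivalent to $\mathcal{T}$ being a Morse function, a degenerate critical point being precisely one whose endpoint is conjugate to $p$ along it; and that the metric growth condition, together with the convergence of $\beta(\tau_i)$ along some sequence $\tau_i\to-\infty$, forces $\mathcal{T}$ to satisfy the Palais--Smale condition (C) and to possess sublevel sets that exhaust $\mathcal{L}_{p,\gamma}$ with stabilising topology. Standard Morse theory on Hilbert manifolds then delivers \eqref{5} and \eqref{6} with $B_\kappa$ the Betti numbers of $\mathcal{L}_{p,\gamma}$. Since $\gamma$ is timelike and vertical its image is a contractible line, so $\mathcal{L}_{p,\gamma}$ deformation retracts onto the based loop space $\Omega\Sigma$; and because $M=\mathbb{R}\times\Sigma$ is homotopy equivalent to $\Sigma$, we have $\Omega\Sigma\simeq\Omega M=L(M)$, which identifies the Betti numbers as claimed.

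I expect the main obstacle to be the verification of condition (C) and of the exhaustion property, since this is exactly where the two unusual hypotheses earn their keep. The metric growth condition is Uhlenbeck's device for ruling out an ``escape to infinity'' of minimising sequences --- the absence of a particle horizon --- and the convergence of the $\beta(\tau_i)$ is what keeps the moving endpoint on $\gamma$ from dragging the topology of the sublevel sets away from that of the full path space. Everything else is either the formal machinery of infinite-dimensional Morse theory or the routine bookkeeping of the time reversal.
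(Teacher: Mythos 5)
Your proposal is correct and takes essentially the same route as the paper: the paper's entire proof of this theorem is the citation ``see Uhlenbeck \cite{UHLENBECK1975}, Sec.~IV and Proposition 5.2,'' with the time-reversal bookkeeping being exactly the adaptation the authors describe in the surrounding text on the (time-reversed) metric growth condition. Your further sketch --- the Fermat-type arrival-time functional, the Palais--Smale verification via the metric growth condition and the convergence of $\beta(\tau_i)$, and the identification of the trial-curve space with the loop space $L(M)$ --- is a faithful outline of what that citation contains rather than a different argument.
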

\begin{proof}
See Uhlenbeck \cite{UHLENBECK1975}, Sec. IV and Proposition 5.2.
\hfill
$\blacksquare$
\end{proof}

We should point out that the convergence condition on $(\beta(\tau_i))_{i\in \mathbb{N}}$ is certainly satisfied if $\beta$ is confined to a compact subset of $\Sigma$, i.e., if $\gamma$ stays in a spatially compact set.

The rotating traversable wormhole spacetime that we will consider below has topology $S^2 \times \mathbb{R}^2$ where $S^2$ is the 2-sphere. As this space is simply connected but not contractible to a point, a theorem by Serre \cite{Serre1951} implies that for all but finitely many $\kappa \in \mathbb{N}_0$ we have $B_{\kappa}>0$. With this condition, \eqref{5} says that $N_{\kappa}>0$ for all but finitely many $\kappa$, i.e for almost every positive integer $\kappa$ we can find a past-pointing lightlike geodesic from $p$ to $\gamma$ with $\kappa$ conjugate points in its interior. Hence, there must be infinitely many past-pointing lightlike geodesics from $p$ to $\gamma$.

\section{ROTATING TRAVERSABLE WORMHOLES}\label{sec:worm}
According to Teo \cite{Teo:1998dp}, a stationary and axisymmetric metric 
suitable for describing a rotating traversable wormhole is given by
\[
g= -\tilde{N}(r,\vartheta)^2dt^2+\Big(1-\dfrac{b(r,\vartheta)}{r} \Big)^{-1}dr^2
\]
\begin{equation}
+\tilde{R}(r,\vartheta)^2\bigg[d\vartheta^2+\sin^2\vartheta(d\varphi-\omega(r,\vartheta)dt)^2 \bigg] .
\label{metric1}
\end{equation}
Here the time coordinate $t$ runs over all of $\mathbb{R}$, the radial coordinate $r$ is restricted by the condition $b(r, \vartheta) < r < \infty$, and $\vartheta$ and $\varphi$ have their usual range as spherical coordinates. On this domain, the metric functions $\tilde{N}$, $\tilde{R}$ and $b$ are assumed to be strictly positive and the condition of asymptotic flatness is assumed to hold, $\tilde{N} (r , \vartheta )\to 1$, $\tilde{R} (r, \vartheta )/r \to 1$, $b(r , \vartheta ) /r \to 0$ and $ r \omega ( r , \vartheta ) \to 0$ for $r \to \infty$. The metric functions have the following meaning. $\tilde{N}$ is the so-called lapse function that relates the time coordinate $t$ to proper time along the $t$-lines. $\tilde{R}$ determines the proper circumference, $2\pi \tilde{R} (r , \vartheta ) \sin \vartheta$, of the circle located at the coordinate values $(r,\vartheta)$, with $\varphi$ ranging from 0 to $2\pi$. $\omega$ determines the twist of the $t$-lines, i.e., the rotation of the wormhole, and $b$ determines the location of the throat. For the latter to be well-defined and regular, one has to require that $\partial _{\vartheta} b \to 0$ and $b-r\partial_r b \to 0$ for $b/r \to 1$. The first condition makes sure that the equation $b(r , \vartheta) = r$ determines a unique radius value, $r_0$, that is independent of $\vartheta$; if this condition is violated, the Ricci scalar of the metric diverges to infinity at the throat, see Teo \cite{Teo:1998dp}, i.e., it is not possible to analytically extend the metric beyond the throat. The second condition is known as the ``flare-out condition''; it makes sure that, if the first condition is satisfied, the area of the sphere at $r=\mathrm{constant}$ approaches a local minimum for $r \to r_0$. If both conditions are satisfied we may join two copies of the metric, each with the coordinate $r$ running from $r_0$ to $\infty$, at the throat, thereby getting a wormhole spacetime with two asymptotically flat ends. If the metric functions $\tilde{N}$, $\tilde{R}$ and $b$ are independent of $\vartheta$ and if $\omega =0$ one gets the spherically symmetric and static class of wormholes discussed by Morris and Thorne \cite{MorrisThorne1988}. 

Teo's representation (\ref{metric1}) of rotating wormholes is convenient for considerations restricted to the region between one asymptotic end and the throat. Here, however, we want to use methods for which the global topology of the manifold is relevant. For such considerations it is desirable to introduce a radial coordinate that covers the entire spacetime. As pointed out already by Teo \cite{Teo:1998dp}, this is possible if $b=b(r)$ is independent of $\vartheta$. Then we may define a new radial coordinate, $\ell$, by
\begin{equation}
\dfrac{d \ell }{dr}=\pm \Big(1-\dfrac{b(r)}{r}\Big)^{-1/2} \, .
\label{r}
\end{equation}
The metric \eqref{metric1} then becomes 
\[
g= -\tilde{N}(\ell,\vartheta)^2dt^2+ d\ell^2
\]
\begin{equation}
+\tilde{R}(\ell,\vartheta)^2\bigg[d\vartheta^2+\sin^2\vartheta(d\varphi-\omega(\ell,\vartheta)dt)^2 \bigg] \, .
\label{m2}
\end{equation}
Here $\tilde{N}$, $\tilde{R}$ and $\omega$ are the same quantities as before, but now with $r$ replaced by the new coordinate $\ell$ and analytically extended to the range $\ell \in \, ]-\infty,+\infty[$. In this way the entire wormhole spacetime, from one asymptotic end at $\ell = - \infty$ to the other one at $\ell = + \infty$, is covered by a single coordinate system (which features the usual coordinate singularities of the angular coordinates). As we read from the metric, $\ell$ gives proper length along each radial line.

Note that it is possible to generalise the component $g_{\ell \ell}=1$ in the metric \eqref{m2} to $g_{\ell \ell}=h(\ell,\vartheta)^2$. If $h ( \ell, \vartheta )$ is strictly positive on the entire spacetime and approaches 1 for $\ell \to \pm \infty$, this modification does not violate the regularity or the asymptotic flatness. The metric \eqref{m2} then becomes
\[
g= -\tilde{N}(\ell,\vartheta)^2dt^2+h(\ell,\vartheta)^2 d\ell^2
\]
\begin{equation}
+\tilde{R}(\ell,\vartheta)^2\bigg[d\vartheta^2+\sin^2\vartheta(d\varphi-\omega(\ell,\vartheta)dt)^2 \bigg]\\
\label{h}
\end{equation} 
This metric describes a class of spacetimes that contains all Teo wormholes with $b = b(r)$. On the other hand, it also includes metrics which are not of the Teo type: If we require, as the only conditions on the  metric coefficients in (\ref{h}), that $\tilde{N}$, $\tilde{R}$ and $h$ are strictly positive and that the condition of asymptotic flatness is satisfied for both $\ell \to - \infty$ and $\ell \to + \infty$, we get a class of mathematical models that describe wormholes in the sense that we have a spacetime without singularities or horizons that connects two asymptotically flat ends. For each $\vartheta$, we can determine the circumference of the circle at $(\ell, \vartheta )$ as a function of $\ell$. As this circumference goes to infinity for $\ell \to - \infty$ and for $\ell \to + \infty$, this function must have at least one local minimum. However, in contrast to the Teo wormholes, the location of this minimum may depend on $\vartheta$, and there may be several local minima (``throats'') with local maxima (``bellies'') in between. In general, there is no symmetry with respect to reflections $\ell \mapsto - \ell$ and also not with respect to reflections $\vartheta \mapsto \pi - \vartheta$. It is the class of wormholes given by (\ref{h}) to which we want to apply Morse theory in this paper.

As the lightlike geodesics and the conformal structure of a spacetime remain unchanged if we perform a conformal transformation with the conformal factor $h( \ell  , \vartheta )^2$, we may switch from the metric (\ref{h}) to the conformally equivalent metric 
\[
g=-\dfrac{\tilde{N}(\ell,\vartheta)^2}{h(\ell,\vartheta)^2}dt^2+d\ell^2
\]
\begin{equation}
+\dfrac{\tilde{R}(\ell,\vartheta)^2}{h(\ell,\vartheta)^2}\bigg[d\vartheta^2+\sin^2\vartheta(d\varphi-\omega(\ell,\vartheta)dt)^2 \bigg] \, . \\
\label{metricq}
\end{equation}
If we define 
\begin{equation}
N:=\dfrac{\tilde{N}}{h} \, , \quad 
R:=\dfrac{\tilde{R}}{h} \, ,
\label{eq:NR}
\end{equation}
we get the metric
\[
g= -N(\ell,\vartheta)^2dt^2+d\ell^2
\]
\begin{equation}
+R(\ell,\vartheta)^2\bigg[d\vartheta^2+\sin^2\vartheta(d\varphi-\omega(\ell,\vartheta)dt)^2 \bigg]\\
\label{metric}
\end{equation}
which is the same as (\ref{m2}). Note that with (\ref{eq:NR}) the regularity and the asymptotic flatness of the metric (\ref{h}) guarantees the regularity and the asymptotic flatness of the metric (\ref{metric}). Therefore, for the rest of this paper, we consider a spacetime $(M,g)$, where $g$ is the metric \eqref{metric}, with the requirement that $N$ and $R$ are strictly positive and that for $\ell \to \pm \infty$
\begin{equation}
N =  1 + O \big( 1/|\ell | \big) \, , \quad
R = | \ell | \big( 1 + O ( 1/| \ell | \big) \, , \quad
\omega = O \big( 1/ | \ell | ^2 \big) \, .
\label{asy}
\end{equation}

Notice that in spacetimes with the metric \eqref{metricq} or \eqref{metric} the Hamilton-Jacobi equation for lightlike geodesics is not in general separable. There are of course special cases where a generalised Carter constant exists which allows to separate the Hamilton-Jacobi equation. This is true, in particular, if in the metric (\ref{metric}) the functions $N$, $R$ and $\omega$ are independent of $\vartheta$. However, for the purpose of this paper it is not necessary to restrict to such cases. It is one of the major advantages of the methods to be applied in this paper that they do not require the existence of a generalised Carter constant. 

Finally we mention that, in general, there exists an ergoregion in our wormhole spacetimes, i.e. a region where $g_{tt} = R^2 \omega ^2-N^2 >0$. However, because of the asymptotic flatness, the ergoregion cannot extend to infinity, i.e., it is restricted to a spatially compact domain $|\ell |< \ell_{max}$. The ergoregion, if it exists, will be of no particular relevance for the following discussion.

\subsection{GLOBAL HYPERBOLICITY AND METRIC GROWTH CONDITION OF THE WORMHOLE METRIC}
For applying Uhlenbeck's theorem to our wormhole spacetimes we first have to demonstrate that the latter are globally hyperbolic and satisfy the metric growth condition. To that end, we use some known results on stationary spacetimes, i.e., on  spacetimes $(M,g)$ where $M$ is a product manifold, $M = \mathbb{R} \times \Sigma$, of the real line $\mathbb{R}$ and a 3-dimensional manifold $\Sigma$, and the metric is of the form
\begin{equation}
g=-N(x) ^2 dt^2+g_{ij}(x)(dx^i+\beta^i (x)  dt)(dx^j+\beta^j (x) dt) \, ,
\end{equation}
where $x = (x^1,x^2,x^3)$ are coordinates on $\Sigma$. Clearly, our wormhole spacetimes are of this form, where $\Sigma = \mathbb{R} \times S^2$ with $(x^1,x^2,x^3) = (\ell, \vartheta, \varphi)$. From (\ref{metric}) we read the ``lapse function'' $N(x)=N(\ell , \vartheta )$, the ``shift vector'' $\beta ^i (x) \partial _i = - \omega ( \ell , \vartheta ) \partial _{\varphi}$ and the spatial metric $g_{ij} (x) dx^idx^j = d \ell ^2 + R(\ell , \vartheta )^2 (d \vartheta ^2 + \mathrm{sin} ^2 \vartheta \, d \varphi ^2 )$.

The asymptotic properties (\ref{asy}) of the wormhole metric are crucial for proving the following.
\begin{proposition}
The wormhole spacetime $(M,g)$ is globally hyperbolic.
\label{prop:ghyp}
\end{proposition}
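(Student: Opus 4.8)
The plan is to produce a Cauchy hypersurface, after which global hyperbolicity is immediate from the Bernal--S{\'a}nchez characterisation \cite{BernalSanchez2003} recalled in \eqref{s1}--\eqref{splitting}. The natural candidate is the slice $\Sigma_0:=\{t=0\}$. First I would check that $t$ is a \emph{temporal} function: for the metric \eqref{metric}, written in lapse--shift form, one has the standard identity $g^{tt}=-1/N^2<0$, so $\nabla t$ is everywhere timelike; fixing the time orientation accordingly, $t$ strictly increases along every future-directed causal curve. In particular $(M,g)$ is stably causal, hence strongly causal. The induced metric on $\Sigma_0$ is $d\ell^2+R^2(d\vartheta^2+\sin^2\vartheta\,d\varphi^2)$, which is positive definite, so $\Sigma_0$ is a smooth spacelike (in particular acausal) hypersurface, met at most once by any causal curve. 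What remains is to show it is met at least once by every inextendible causal curve; since $t$ is strictly monotone along such a curve, this amounts to showing $t\to+\infty$ along every future-inextendible causal curve and $t\to-\infty$ along every past-inextendible one.

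This last point is where the asymptotic conditions \eqref{asy} enter. Since $N$ is continuous, strictly positive and tends to $1$ as $\ell\to\pm\infty$, it is globally bounded: $0<N\le N_{\max}$. Let $c$ be a future-inextendible causal curve, parametrised by $t$ (admissible because $\nabla t$ is timelike), so $c(t)=(t,\ell(t),\vartheta(t),\varphi(t))$. The causal inequality $g(\dot c,\dot c)\le 0$ gives, among its nonnegative terms, $\dot\ell(t)^2\le N^2\le N_{\max}^2$. If $t$ stayed bounded above along $c$, then $\ell(t)$ would stay bounded, while $\vartheta\in[0,\pi]$ and $\varphi\in S^1$ are bounded anyway, so $c$ would be imprisoned in a compact subset of $M$ --- impossible for an inextendible causal curve in a strongly causal spacetime. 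Hence $t\to+\infty$; the past case is identical. Therefore $\Sigma_0$ is a Cauchy hypersurface and $(M,g)$ is globally hyperbolic.

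An alternative route would bypass the explicit Cauchy surface and invoke the known criteria for global hyperbolicity of standard stationary spacetimes: one would check that the spatial metric $g_0=d\ell^2+R^2(d\vartheta^2+\sin^2\vartheta\,d\varphi^2)$ on $\Sigma=\mathbb{R}\times S^2$ is a \emph{complete} Riemannian manifold --- which follows from $d_{g_0}(x,x_0)\ge|\ell(x)-\ell(x_0)|$, so that $g_0$-Cauchy sequences have bounded $\ell$-coordinate and subconverge within a compact slab $|\ell|\le L$ --- and that the lapse $N$ and the shift norm $\|\omega\,\partial_\varphi\|_{g_0}=|\omega|R\sin\vartheta=O(1/|\ell|)$ are bounded. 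Either way, the single genuine obstacle is the same: ruling out causal curves that escape to spatial infinity ($\ell\to\pm\infty$) within a finite lapse of coordinate time $t$. It is exactly here that asymptotic flatness \eqref{asy} is indispensable --- concretely through the global bound on $N$, equivalently through the completeness of $g_0$ --- whereas the remaining ingredients (that $t$ is temporal, that $\Sigma_0$ is acausal, and the no-imprisonment property of strongly causal spacetimes) are standard.
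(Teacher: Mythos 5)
Your proof is correct, but your primary argument takes a genuinely different route from the paper. The paper does not construct a Cauchy hypersurface by hand: it verifies three hypotheses --- a two-sided positive bound on the lapse $N$, completeness of the spatial metric $d\ell^2+R^2(d\vartheta^2+\sin^2\vartheta\,d\varphi^2)$ (obtained by exhibiting $\ell$ as a proper function with unit gradient and invoking Gordon's completeness criterion \cite{10.2307/2038738}), and a bound on the norm of the shift $-\omega\,\partial_\varphi$ --- and then cites the theorem of Choquet-Bruhat and Cotsakis \cite{ChoquetBruhat:2002su} for stationary (in fact more general) spacetimes. Your ``alternative route'' in the last paragraph is essentially the paper's proof, with Gordon's criterion replaced by the direct estimate $d_{g_0}(x,x_0)\ge|\ell(x)-\ell(x_0)|$. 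Your main argument instead reproves the relevant special case from scratch: $g^{tt}=-1/N^2$ makes $t$ temporal, the causal inequality yields $\dot\ell^2\le N^2\le N_{\max}^2$, and boundedness of $t$ would imprison an inextendible causal curve in a compact set, which strong causality forbids; hence $t$ is onto $\mathbb{R}$ along every inextendible causal curve and $\{t=0\}$ is Cauchy in the sense of \eqref{s1}. This is self-contained modulo standard causality theory and makes completely explicit that the only genuine obstruction is escape to $\ell=\pm\infty$ in finite coordinate time, controlled by asymptotic flatness \eqref{asy}; the paper's route is shorter and rests on a general theorem that also covers time-dependent spatial metrics. One cosmetic point: the parenthetical ``spacelike (in particular acausal)'' is not a valid implication in general --- acausality of $\Sigma_0$ follows from the strict monotonicity of $t$ along causal curves, which you do establish in the preceding sentence, not from the positive-definiteness of the induced metric.
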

\begin{proof}
We establish three properties. (i) There are positive constants $N_1$ and $N_2$ such that the lapse function $N(\ell , \vartheta )$ satisfies $0 < N_1< N( \ell , \vartheta ) < N_2$ on the entire spacetime. This follows immediately from the facts that the lapse function is everywhere strictly positive and that, by (\ref{asy}), it goes to 1 for $\ell \to \pm \infty$. (ii) The spatial part of the wormhole spacetime, $( \Sigma , g_{ij}(x) dx^idx^j)$, is a complete Riemannian manifold. To prove this, we observe that the coordinate function $\ell$ can be viewed as a function $\ell: \Sigma = \mathbb{R} \times S^2 \to \mathbb{R}$, defined just by projecting onto the first factor. Obviously, this function is proper, i.e., for every compact subset $I \subset \mathbb{R}$ the pre-image $\ell ^{-1} (I)$ is compact. Moreover, the gradient of this function has constant norm 1 with respect to the metric $g_{ij} (x) dx^idx^j$. We have thus proven that the Riemannian manifold $( \Sigma , g_{ij}(x) dx^idx^j)$ admits a proper function with bounded norm. According to a general result by Gordon \cite{10.2307/2038738}, this implies that this Riemannian manifold is complete. (iii) There is a positive constant $B$ that bounds the norm of the shift vector $\beta ^i (x) \partial _i = - \omega ( \ell , \vartheta )\partial _{\varphi}$, i.e., $g_{ij} (x) \beta ^i (x) \beta ^j (x) = R( \ell , \vartheta ) ^2 \mathrm{sin} ^2 \vartheta \, \omega ( \ell , \vartheta )^2 \le B^2$. This follows from the facts that, by (\ref{asy}), the function $R( \ell , \vartheta ) ^2 \mathrm{sin} ^2 \vartheta \, \omega ( \ell , \vartheta ) ^2$ goes to zero for $\ell \to \pm \infty$ and that this function has no singularities. Having established the three properties (i), (ii) and (iii), we can now refer to a result by Choquet-Bruhat and Cotsakis \cite{ChoquetBruhat:2002su} who have shown that these three properties imply that the spacetime metric is globally hyperbolic. Note that Choquet-Bruhat and Cotsakis allow the spatial metric to be time-dependent. Then one also has to establish that it is bounded below by a time-independent metric. As our spatial metric is time-independent, this condition is trivially satisfied. \hfill $\blacksquare$
\end{proof}

\begin{proposition}
The wormhole spacetime admits an orthogonal splitting that
satisfies the metric growth condition.
\label{prop:growth}
\end{proposition}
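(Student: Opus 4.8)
The plan is to proceed in two stages: first construct an explicit orthogonal splitting of the metric \eqref{metric} — which is not immediately available because \eqref{metric} is stationary but not static — and then verify Uhlenbeck's metric growth condition \eqref{3}--\eqref{eq:metricgrowth} for that splitting by elementary estimates.

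For the first stage I would introduce the new azimuthal coordinate $\phi := \varphi - \omega(\ell,\vartheta)\,t$, leaving $t$, $\ell$ and $\vartheta$ unchanged. The motivation is that, reading off \eqref{metric}, the vector field orthogonal to the hypersurfaces $\{t = \mathrm{const}\}$ is $\partial_t + \omega\,\partial_\varphi$; along its integral curves $\ell$ and $\vartheta$ (hence $\omega$) are constant, so $\phi$ is constant, and therefore $\phi$ together with $(\ell,\vartheta)$ provides coordinates adapted to the slicing. Concretely $d\varphi - \omega\,dt = d\phi + t\,\partial_\ell\omega\,d\ell + t\,\partial_\vartheta\omega\,d\vartheta$, so \eqref{metric} becomes
\[
g = -N(\ell,\vartheta)^2\,dt^2 + d\ell^2 + R^2\,d\vartheta^2 + R^2\sin^2\vartheta\,\big(d\phi + t\,\partial_\ell\omega\,d\ell + t\,\partial_\vartheta\omega\,d\vartheta\big)^2 .
\]
Since $t$ now enters only through the term $-N^2\,dt^2$, this is an orthogonal splitting as in \eqref{splitting}, with $f(x,t) = N(\ell,\vartheta)^2$ and with spatial metric $H_{ij}(x,t)$, $x=(\ell,\vartheta,\phi)$, given by the remaining three terms; $H_{ij}(\cdot,t)$ is positive definite for every $t$ because it is dominated by $d\ell^2 + R^2\,d\vartheta^2$. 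The coordinate change leaves the slices $\{t=\mathrm{const}\}$ fixed, and these are Cauchy hypersurfaces (this is part of what the Choquet-Bruhat--Cotsakis result used in Proposition \ref{prop:ghyp} yields), so \eqref{s1}--\eqref{splitting} hold.

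For the second stage, fix a compact set $K\subset\Sigma$ and take as comparison metric the time-independent Riemannian metric $G_{ij}\,dx^i dx^j := d\ell^2 + R^2\,d\vartheta^2 + R^2\sin^2\vartheta\,d\phi^2$, which equals $H_{ij}\,dx^i dx^j$ at $t=0$ and is complete (it is the spatial metric of Proposition \ref{prop:ghyp}). Expanding the square in $H_{ij}$, using $(a+b+c)^2 \le 3(a^2+b^2+c^2)$, and bounding the continuous functions $R$, $\sin\vartheta$, $\partial_\ell\omega$, $\partial_\vartheta\omega$ on $K$, one compares the two diagonal quadratic forms coefficient by coefficient and obtains constants $C_1,C_2>0$, depending only on $K$, with $H_{ij}(x,t)\,v^iv^j \le (C_1 + C_2\,t^2)\,G_{ij}(x)\,v^iv^j$ for all $x\in K$, all $t$, all $v$. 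Because $f(x,t)=N^2\ge N_1^2>0$ by property (i) in the proof of Proposition \ref{prop:ghyp}, the choice
\[
F(t) := \tfrac{1}{N_1}\sqrt{C_1 + C_2\,t^2}
\]
gives $f(x,t)\,F(t)^2 \ge C_1 + C_2\,t^2$, so \eqref{eq:metricgrowth} holds for $t\le 0$; and since $F(t) \le N_1^{-1}\big(\sqrt{C_1} + \sqrt{C_2}\,|t|\big)$ grows at most linearly, $\int_{-\infty}^0 dt/F(t)$ diverges, which is \eqref{3}. Near the poles $\vartheta=0,\pi$ the coordinate $\phi$ is singular; there the same estimate is obtained in standard nonsingular coordinates on $S^2$, the comparison being a coordinate-free statement about symmetric bilinear forms.

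The only genuine obstacle is the very first step: since \eqref{metric} has a nonvanishing $g_{t\varphi}$ component it is not in orthogonal form, so a time-adapted coordinate system must be built, and the price is that the resulting spatial metric $H_{ij}$ acquires a time dependence. What rescues the argument is that this time dependence is only quadratic in $t$ — mild enough that a comparison function $F$ growing linearly in $|t|$, for which $\int_{-\infty}^0 dt/F(t)=\infty$ still holds, suffices to dominate it. Once the orthogonal splitting is in hand, the rest is routine estimation with coefficients bounded on compact sets.
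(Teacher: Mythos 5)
Your proposal is correct and follows essentially the same route as the paper: the same coordinate change $x^3=\varphi-\omega(\ell,\vartheta)t$ producing the orthogonal splitting with $f=N^2$ and a spatial metric whose $t$-dependence is quadratic, followed by a compact-set estimate yielding a comparison function $F$ of linear growth in $|t|$ so that $\int_{-\infty}^0 dt/F(t)=\infty$. The only cosmetic differences are your choice of the $t=0$ spatial metric rather than $\delta_{ij}$ as the reference metric $G_{ij}$ and your explicit handling of the coordinate singularity at the poles, both of which are immaterial to the argument.
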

\begin{proof}
In the metric \eqref{metric} the $t$-lines are not orthogonal to the surfaces 
$t =$  constant. Therefore, we change to new spatial coordinates
\begin{equation}
x^1=\ell, \quad 
x^2=\vartheta,\quad
x^3=\varphi- \omega (\ell,\vartheta)t.
\end{equation}
Then the metric \eqref{metric} of the rotating traversable wormhole takes the 
orthogonal splitting form \eqref{splitting}, with
\[
H_{ij}(x,t)dx^i dx^j=d\ell^2+R ( \ell , \vartheta ) ^2 d\vartheta^2
\]
\begin{equation}
+R ( \ell , \vartheta ) ^2 \sin\vartheta^2 \Big(t(\dfrac{\partial \omega(\ell,\vartheta)}{\partial \ell}d\ell+\dfrac{\partial \omega(\ell,\vartheta)}{\partial \vartheta}d\vartheta)+dx^3\Big)^2
\label{eq:split1}
\end{equation}
and
\begin{equation}
f(x,t)=N(\ell,\vartheta)^2 \, .
\label{eq:split2}
\end{equation}
If we restrict the range of the coordinates $x=(x^1,x^2,x^3)$ to a compact set in $\Sigma= \mathbb{R} \times S^2$, we read from \eqref{eq:split1} and \eqref{eq:split2} that there are positive constants $A$ and $B$ such that
\begin{equation}
\dfrac{H_{ij}(x,t)v^i v^j}{f(x,t)} \le 
(A+B|t|)^2 \delta_{ij}v^i v^j
\label{spw}
\end{equation}
for all $(v^1,v^2,v^3) \in \mathbb{R}^3$. This demonstrates that the metric growth condition \eqref{eq:metricgrowth} holds, with $F(t)= A+B|t|$ and $G_{ij}= \delta _{ij}$. 
\hfill
$\blacksquare$
\end{proof}

\section{INERTIAL FORCES IN THE WORMHOLE SPACETIME}\label{sec:inertial}
We derive now the inertial forces for observers on circular orbits around the axis of rotational symmetry in the wormhole spacetime $(M,g)$ with the metric \eqref{metric}. This will allow us to define two potentials $\Psi _{\pm}$ that give us important information on lightlike geodesics. For our discussion it will be helpful to introduce the following orthonormal basis on the spacetime $(M,g)$:
\begin{equation}
E_0=\dfrac{1}{N} \big( \partial_t+ \omega \, \partial_{\varphi} \big) \, , \quad
E_1=\partial_\ell \, , \quad
E_2=\dfrac{1}{R}\partial_{\vartheta} \, , \quad
E_3= \dfrac{1}{R\,\sin\vartheta}\partial_{\phi} \, , \quad
\end{equation}
whose dual basis is given by the covector fields
\begin{equation}
-g(E_0,.)=N\,dt \, , \:
g(E_1,.)=d\ell \, , \:
g(E_2,.)=R\,d\vartheta \, , \:
g(E_3,.)= R \, \sin\vartheta  \big( d \varphi - \omega \, dt \big) \, .
\end{equation}
For later calculations we list all nonvanishing Lie brackets of the $E_{\mu}$,
\begin{equation}
[E_0,E_1]=\dfrac{\partial_{\ell}N}{N}\;E_0-\dfrac{R \partial_{\ell}\omega\sin\vartheta}{N}\;E_3 \, ,
\end{equation}
\begin{equation}
[E_0,E_2]=\dfrac{\partial_{\vartheta}N}{R N}\;E_0-\dfrac{\partial_{\vartheta}\omega\sin\vartheta}{N} \;E_3 \, ,
\end{equation}
\begin{equation}
[E_1,E_2]=-\dfrac{\partial_{\ell}R}{R}\;E_2 \, ,
\end{equation}
\begin{equation}
[E_1,E_3]=-\dfrac{\partial_{\ell}R}{R}\;E_3 \, ,
\end{equation}
\begin{equation}
[E_2,E_3]=-\dfrac{\sin\vartheta\;\partial_{\vartheta}R+\cos\vartheta\;R}{R^2\sin\vartheta}\;E_3 \, .
\end{equation}
The 4-velocities of observers who circle along the $\varphi$-lines are given by
\begin{equation}
U=\gamma \big( E_0 \pm  v \, E_3  \big) \quad \text{with} \quad 
\gamma:=\dfrac{1}{\sqrt{1-v^2}}
\label{U}
\end{equation}
where the number $v\in [0,1]$ gives the velocity (in units of the velocity of light) of these observers with respect to the stationary observers whose worldlines are the $t$-lines. Note that the integral curves of $U$ are parametrised by proper time, i.e., $g(U,U)=-1$. For the upper sign in \eqref{U}, the motion relative to the stationary observers is in the positive $\varphi$-direction, for the negative sign it is in the negative $\varphi$-direction. Clearly, $U$ is non-geodesic, $\nabla_U U 
\neq 0$, i.e., one needs a thrust to stay on an integral curve of $U$.

With $g(\nabla_U U,E_{\mu})=-g(U,[U,E_{\mu}])$, the tetrad components of $\nabla _U U$ are determined by the Lie brackets that we have calculated above. Thereupon the acceleration  of a freely falling particle relative to the $U$-observer can be decomposed into three parts, 
\begin{equation}
-g(\nabla_U U, \, . \, )= A_{\mathrm{grav}}+A_{\mathrm{cor}}+A_{\mathrm{cent}} \, ,
\end{equation}
according to the rule that the gravitational acceleration is independent of $v$,
\begin{equation}
A_{\mathrm{grav}}= -\dfrac{\partial_{\ell}N}{N}\;d\ell-\dfrac{\partial_{\vartheta}N}{N}\;d\vartheta,
\end{equation} 
the Coriolis acceleration is odd with respect to $v$,
\[
A_{\mathrm{cor}}=\pm\dfrac{v}{(1-v^2)}\Bigg(-\dfrac{R\; \partial_{\ell}\omega\;\sin\vartheta}{N}\;d\ell 
\]
\begin{equation}
-\dfrac{R\;\partial_{\vartheta}\omega\;\sin\vartheta}{N}\;d\vartheta\Bigg),
\end{equation}
and the centrifugal acceleration  is even with respect to $v$,
\[
A_{\mathrm{cent}}=\dfrac{v^2}{(1-v^2)}\Bigg(\Big(-\dfrac{\partial_{\ell}N}{N}+
\dfrac{\partial_{\ell}R}{R}\Big)\;d\ell
\]
\begin{equation}
+\Big(-\dfrac{\partial_{\vartheta}N}{N}+\dfrac{\sin\vartheta\;\partial_{\vartheta}R+R\cos\vartheta}{R\sin\vartheta}\Big)d\vartheta\Bigg) \, .
\end{equation}
Multiplying the inertial acceleration with the rest mass of the freely falling particle gives the corresponding inertial force. 

Quite generally, the gravitational, Coriolis and centrifugal accelerations are unambiguously defined whenever a timelike 2-surface with a timelike vector field has been specified, see Foertsch et al. \cite{FoertschEtAl2003}. Here we apply this procedure to each 2-surface $(\ell,\vartheta)= \,$constant with the timelike vector field $E_0$.

We want to investigate the behaviour of the inertial accelerations if $v$ approaches the velocity of light. If we take the sum of Coriolis and centrifugal acceleration up to the positive factor $v/(1-v^2)$, we find: 
\[
Z_{\pm}(v)=\pm \Bigg(-\dfrac{R\;\partial_{\ell}\omega\sin\vartheta}{N}\;d\ell
-\dfrac{R\;\partial_{\vartheta}\omega\sin\vartheta}{N}\;d\vartheta\Bigg)
\]
\begin{equation}
+
v\Bigg(\Big(-\dfrac{\partial_{\ell}N}{N}+\dfrac{\partial_{\ell}R}{R}\Big)\;d\ell+\Big(-\dfrac{\partial_{\vartheta}N}{N}+\dfrac{\sin\vartheta\;\partial_{\vartheta}R+R \cos\vartheta}{R\sin\vartheta}\Big)d\vartheta\Bigg)
\end{equation}
To consider the behavior for $v$ close to the velocity of light, we take the limit $v\to 1$, 
\[
\displaystyle{\lim_{v \to 1}}Z_{\pm}= \Bigg(-\dfrac{\partial_{\ell}N}{N}+\dfrac{\partial_{\ell}R}{R}\mp \dfrac{R\;\partial_{\ell}\omega \sin\vartheta }{N} \Bigg)\;d\ell
\]
\begin{equation}
+\Bigg(-\dfrac{\partial_{\vartheta}N}{N}+\dfrac{\partial_{\vartheta}R}{R}+\dfrac{\cos\vartheta}{\sin\vartheta}\mp \dfrac{R\partial_{\vartheta}\omega\sin\vartheta}{N} \Bigg)d\vartheta
\end{equation}
This can be rewritten as 
\begin{equation}
\displaystyle{\lim_{v \to 1}}Z_{\pm}=  \dfrac{R \sin\vartheta}{N}\; d\Psi_{\pm} 
\label{limit}
\end{equation}
where 
\[
d\Psi_{\pm}=\Big(-\dfrac{\partial_{\ell}N}{R\sin\vartheta}+\dfrac{N \partial_{\ell}R}{R^2 \sin\vartheta}\mp \partial_{\ell}\omega \Big)\;d\ell
\]
\begin{equation}
+\Big(-\dfrac{\partial_{\vartheta}N}{R \sin\vartheta}+\dfrac{N \partial_{\vartheta}R}{R^2 \sin\vartheta}+\dfrac{N \cos\vartheta}{R\sin^2\vartheta}\mp \partial_{\vartheta}\omega \Big) \; d\vartheta
\label{dpsi}
\end{equation}
is the differential of the function
\begin{equation}
\Psi_{\pm}=-\dfrac{N}{R \sin\vartheta}\mp \omega \, .
\label{pot}
\end{equation}
As in \eqref{pot} there is a factor of $\sin\vartheta$ in the denominator, both $\Psi_{+}$ and $\Psi_{-}$ are singular along the axis. Outside the ergoregion $\Psi_{+}$ is negative and $\Psi_{-}$ is positive, and inside the ergoregion (if there is any) one of the two potentials changes sign.

From the asymptotic flatness it follows that 
\begin{align}
\Psi _{\pm} = - \dfrac{1}{ \ell  \, \mathrm{sin} \, \vartheta} \Big( 1 + O \big( 1/ | \ell |) \Big)     
\quad \text{for} \: \ell \to \infty \, , 
\\
\Psi _{\pm} =  \dfrac{1}{ \ell  \, \mathrm{sin} \, \vartheta} \Big( 1 + O \big( 1/ | \ell |) \Big)     
\quad \text{for} \: \ell \to - \infty \, ,
\end{align}
and
\begin{align}
\partial _{\ell} \Psi _{\pm} =  \dfrac{1}{ \ell ^2 \, \mathrm{sin} \, \vartheta} \Big( 1 + O \big( 1/ | \ell |) \Big)     \quad \text{for} \: \ell \to \infty \, , 
\label{eq:asyPsi1}
\\
\partial _{\ell} \Psi _{\pm} =  - \dfrac{1}{ \ell ^2 \, \mathrm{sin} \, \vartheta} \Big( 1 + O \big( 1/ | \ell |) \Big)     \quad \text{for} \: \ell \to - \infty \, . 
\label{eq:asyPsi2}
\end{align}

Eq. \eqref{limit} tells us that, in the limit $v\to 1$, the sum of Coriolis and centrifugal force is perpendicular to the surfaces $\Psi_{\pm}=\,$constant and points in the direction of increasing $\Psi_{\pm}$. In this limit, we may thus view the function $\Psi_{+}$ (or $\Psi_{-}$, respectively) as a Coriolis-plus-centrifugal potential for co-rotating (or counter-rotating, respectively) observers. The surfaces $\Psi_{\pm}=\,$constant are shown for example spacetimes in Figures \ref{psi} and \ref{psi2}.

The potentials $\Psi_{\pm}$ are quite analogous to the potentials that were introduced by Hasse and Perlick \cite{Hasse:2005vu} for the Kerr-Newman metric. We will see that these potentials are relevant for lensing because they tell us where the radius coordinate $\ell$ has minima or maxima along a lightlike geodesic.

With the help of the potentials  $\Psi_{\pm}$, we decompose the wormhole spacetime in the following way: 
\begin{definition}
We define the regions $M_{out}$, $M_{in}$, $K_{+}$ and $K_{-}$ by the following properties:
\begin{equation}
\partial_{\ell}\Psi_{+}<0\;\text{and}\;\partial_{\ell}\Psi_{-}<0 \;\text{on}\;M_{in}, 
\end{equation}
\begin{equation}
\partial_{\ell}\Psi_{+}<0\;\text{and}\;\partial_{\ell}\Psi_{-}>0 \;\text{on}\;K_{-}, 
\end{equation}
\begin{equation}
\partial_{\ell}\Psi_{+}>0\;\text{and}\;\partial_{\ell}\Psi_{-}<0 \;\text{on}\;K_{+}, 
\end{equation}
\begin{equation}
\partial_{\ell}\Psi_{+}>0\;\text{and}\;\partial_{\ell}\Psi_{-}>0 \;\text{on}\;M_{out}. 
\end{equation}
We also define the closed set $K = M \setminus \big( M_{in} \cup M_{out} \big)$
\label{def:inout}
\end{definition} 

The following proposition follows from this definition. 
\begin{proposition} 
\begin{itemize}
\item[(a)]
$M_{\mathrm{out}}$ is the set of all events where
\begin{equation}
   - \dfrac{\partial_{\ell}N}{N}+\dfrac{\partial_{\ell}R}{R}
   > \Big|  \dfrac{R \, \mathrm{sin} \, \vartheta}{N} \, \partial_{\ell}\omega \Big|  
\end{equation}
and $M_{\mathrm{in}}$ is the set of all events where
\begin{equation}
   - \dfrac{\partial_{\ell}N}{N}+\dfrac{\partial_{\ell}R}{R}
   < - \Big| \dfrac{R \, \mathrm{sin} \, \vartheta}{N} \, \partial_{\ell}\omega \Big|  
\end{equation}
\item[(b)]
There are $\ell _1$ and $\ell _2$ such that the region $- \infty < \ell < \ell _1$ is completely 
contained in $M_{\mathrm{in}}$ and the region $\ell _2 < \ell < \infty$ is completely contained in $M_{\mathrm{out}}$.
\end{itemize}
\label{prop:inout}
\end{proposition}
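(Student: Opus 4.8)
The plan is to read everything off the explicit formula \eqref{dpsi} for $d\Psi_{\pm}$ together with the asymptotic estimates \eqref{eq:asyPsi1}--\eqref{eq:asyPsi2}. From the $d\ell$-component of \eqref{dpsi},
\[
\partial_{\ell}\Psi_{\pm}=\dfrac{N}{R\sin\vartheta}\left(-\dfrac{\partial_{\ell}N}{N}+\dfrac{\partial_{\ell}R}{R}\right)\mp\partial_{\ell}\omega \, ,
\]
and, since $N$ and $R$ are strictly positive and $\sin\vartheta>0$ away from the axis, the prefactor $N/(R\sin\vartheta)$ is positive there. Writing $c:=-\partial_{\ell}N/N+\partial_{\ell}R/R$, the pair of conditions $\partial_{\ell}\Psi_{+}>0$ and $\partial_{\ell}\Psi_{-}>0$ of Definition~\ref{def:inout} reads $\frac{N}{R\sin\vartheta}c-\partial_{\ell}\omega>0$ and $\frac{N}{R\sin\vartheta}c+\partial_{\ell}\omega>0$, which together are equivalent to $\frac{N}{R\sin\vartheta}c>|\partial_{\ell}\omega|$, i.e.\ to $c>\left|\frac{R\sin\vartheta}{N}\,\partial_{\ell}\omega\right|$; this is the defining inequality of $M_{\mathrm{out}}$ in part~(a). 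In exactly the same way, $\partial_{\ell}\Psi_{+}<0$ and $\partial_{\ell}\Psi_{-}<0$ together are equivalent to $\frac{N}{R\sin\vartheta}c<-|\partial_{\ell}\omega|$, i.e.\ to $c<-\left|\frac{R\sin\vartheta}{N}\,\partial_{\ell}\omega\right|$, which is the condition for $M_{\mathrm{in}}$. Part~(a) thus needs no estimates, only the sign of the common prefactor.

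For part~(b) I would feed the asymptotics into part~(a). By \eqref{asy}, as $\ell\to+\infty$ the left side $c=-\partial_{\ell}N/N+\partial_{\ell}R/R$ behaves like $1/\ell$ (positive), while the right side $\left|\frac{R\sin\vartheta}{N}\,\partial_{\ell}\omega\right|$ is $O(1/\ell^2)$ uniformly in $\vartheta$ because $\sin\vartheta\le1$; hence the $M_{\mathrm{out}}$ inequality holds for all $\vartheta$ once $\ell$ exceeds some $\ell_2>0$, so $\{\ell>\ell_2\}\subseteq M_{\mathrm{out}}$. Symmetrically, as $\ell\to-\infty$ the quantity $c$ again behaves like $1/\ell$, now negative, while $\left|\frac{R\sin\vartheta}{N}\,\partial_{\ell}\omega\right|=O(1/\ell^2)$, so the $M_{\mathrm{in}}$ inequality holds for $\ell<\ell_1$ with a suitable $\ell_1<0$, giving $\{\ell<\ell_1\}\subseteq M_{\mathrm{in}}$. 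One can instead quote \eqref{eq:asyPsi1}--\eqref{eq:asyPsi2} directly: for $\ell\to+\infty$ both $\partial_{\ell}\Psi_{+}$ and $\partial_{\ell}\Psi_{-}$ equal $\frac{1}{\ell^2\sin\vartheta}\bigl(1+O(1/|\ell|)\bigr)$, which is positive for $\ell$ large, and for $\ell\to-\infty$ both equal $-\frac{1}{\ell^2\sin\vartheta}\bigl(1+O(1/|\ell|)\bigr)$, which is then negative; reading off these signs yields the same $\ell_1$ and $\ell_2$.

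The one point that requires care is the symmetry axis $\vartheta\in\{0,\pi\}$, where $\Psi_{\pm}$ and $\partial_{\ell}\Psi_{\pm}$ diverge; there the regions $M_{\mathrm{out}}$, $M_{\mathrm{in}}$, $K_{\pm}$ are to be understood as subsets of $M$ with the axis removed, and one needs the $O$-terms in \eqref{asy} (hence those in \eqref{eq:asyPsi1}--\eqref{eq:asyPsi2}) to be uniform in $\vartheta$ so that $\ell_1$ and $\ell_2$ can be chosen $\vartheta$-independently. I do not expect this to be a genuine obstacle: in $\partial_{\ell}\Psi_{\pm}$ the factor $1/\sin\vartheta$ occurs only as an overall positive multiplier of the leading term and so never affects the sign once $|\ell|$ is large, and the $\omega$-contribution is controlled uniformly because $\sin\vartheta\le1$. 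The bulk of the actual writing would just be the routine bookkeeping of the $O$-estimates for $\partial_{\ell}N$, $\partial_{\ell}R$ and $\partial_{\ell}\omega$ obtained by differentiating \eqref{asy}.
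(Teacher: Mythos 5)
Your proposal is correct and follows essentially the same route as the paper: part (a) is obtained by factoring the positive prefactor $N/(R\sin\vartheta)$ out of the $d\ell$-component of \eqref{dpsi} and combining the two sign conditions of Definition \ref{def:inout} into the single inequality with the absolute value, and part (b) is read off from the asymptotics \eqref{eq:asyPsi1}--\eqref{eq:asyPsi2}. Your remark on uniformity in $\vartheta$ near the axis is a sensible bit of extra care that the paper leaves implicit, but it does not change the argument.
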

\begin{proof}
From \eqref{dpsi} we read that 
\begin{equation}
\partial _{\ell} \Psi _{\pm} =
\dfrac{N}{R \, \mathrm{sin} \, \vartheta}
\Big( - \dfrac{\partial_{\ell}N}{N}+\dfrac{\partial_{\ell}R}{R}\mp \dfrac{R \, \mathrm{sin} \, \vartheta}{N} \, \partial_{\ell}\omega \Big)
\end{equation}
which implies part (a). Part (b) follows immediately from \eqref{eq:asyPsi1} and \eqref{eq:asyPsi2}. 
\hfill $\blacksquare$
\end{proof}

It can be read from the definitions that, for $v$ sufficiently close to 1, in $M_{\mathrm{in}}$ the direction of centrifugal-plus-Coriolis force is always pointing in the direction of decreasing $\ell$ and in $M_{\mathrm{out}}$ it is always pointing in the direction of increasing $\ell$. This means that in these regions the centrifugal-plus-Coriolis force is always pointing away from the centre, for co-rotating and counter-rotating observers, which is the situation one is used to from Newtonian physics. By contrast, in the interior of the regions $K$ the centrifugal-plus-Coriolis force points in the reverse direction, either for co-rotating or for counter-rotating observers. Therefore, the boundary of the region $K$ determines the points where centrifugal-plus-Coriolis force reversal takes place.

Moreover, the regions $K$, $M_{in}$ and $M_{out}$ are helpful because they tell us where the radius coordinate $\ell$ has minima and maxima along a lightlike geodesic. We present these properties in the following proposition. 

\begin{proposition} \label{proposition2}
a) In the region $M_{out}$, the radius coordinate $\ell$ cannot have other extrema than
strict local minima along a lightlike geodesic.\\
b) In the region $M_{in}$, the radius coordinate $\ell$ cannot have other extrema than
strict local maxima along a lightlike geodesic.\\
c) 
Through each point of $K$ there is a lightlike geodesic such that the first and second derivatives of $\ell$ with respect to the affine parameter vanish at this point. 
\label{prop:minmax}
\end{proposition}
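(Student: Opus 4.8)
The plan is to reduce everything to the sign of $d^{2}\ell/ds^{2}$ at a point where $d\ell/ds=0$, exploiting the two Killing vector fields of \eqref{metric}. For a lightlike geodesic $\lambda(s)$ with tangent $\dot\lambda$, I set $E=-g(\dot\lambda,\partial_t)$ and $L=g(\dot\lambda,\partial_\varphi)$; these are constant along $\lambda$. Reading off the metric coefficients from \eqref{metric} and solving the linear system for $(\dot t,\dot\varphi)$ gives $\dot t=(E-\omega L)/N^{2}$ and $\dot\varphi-\omega\dot t=L/(R^{2}\sin^{2}\vartheta)$, so the lightlike condition $g(\dot\lambda,\dot\lambda)=0$ becomes
\begin{equation}
\dot\ell^{\,2}+R^{2}\dot\vartheta^{\,2}=\frac{(E-\omega L)^{2}}{N^{2}}-\frac{L^{2}}{R^{2}\sin^{2}\vartheta}.
\label{eq:nullconstraint}
\end{equation}
A lightlike tangent cannot have $\dot t=0$ (else \eqref{eq:nullconstraint} forces $\dot\lambda=0$), so $E-\omega L=N^{2}\dot t\neq 0$ along $\lambda$.

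Because $g_{\ell\ell}=1$ and $g_{\ell\mu}=0$ for $\mu\neq\ell$, the geodesic equation for $\ell$ is $\ddot\ell=\tfrac12\,\partial_\ell g_{\mu\nu}\dot x^{\mu}\dot x^{\nu}$. At a point where $\dot\ell=0$, inserting $\dot t,\dot\varphi$ and using \eqref{eq:nullconstraint} to eliminate $R^{2}\dot\vartheta^{\,2}$, the two contributions of $\partial_\ell(R^{2}\sin^{2}\vartheta)$ cancel and one obtains
\begin{equation}
\ddot\ell=\frac{(E-\omega L)^{2}}{N^{2}}\left(-\frac{\partial_\ell N}{N}+\frac{\partial_\ell R}{R}-\frac{L\,\partial_\ell\omega}{E-\omega L}\right).
\label{eq:ddotell}
\end{equation}
Moreover, at such a point \eqref{eq:nullconstraint} forces $(L/(E-\omega L))^{2}\le R^{2}\sin^{2}\vartheta/N^{2}$, hence $\bigl|L\,\partial_\ell\omega/(E-\omega L)\bigr|\le\bigl|R\sin\vartheta\,\partial_\ell\omega/N\bigr|$. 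Parts (a) and (b) follow at once: wherever $\dot\ell=0$ along a lightlike geodesic, \eqref{eq:ddotell} applies, and in $M_{\mathrm{out}}$ the inequality of Proposition~\ref{prop:inout}(a) makes the bracket strictly positive, so $\ddot\ell>0$ and the point is a strict local minimum of $\ell$; in $M_{\mathrm{in}}$ the reversed inequality gives $\ddot\ell<0$ and a strict local maximum.

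For part (c) I would first observe that the expression for $\partial_\ell\Psi_\pm$ derived in the proof of Proposition~\ref{prop:inout} identifies $K$ with the set of points where $\bigl|-\partial_\ell N/N+\partial_\ell R/R\bigr|\le\bigl|R\sin\vartheta\,\partial_\ell\omega/N\bigr|$. Given $q\in K$ away from the axis, choose $E,L$ so that the bracket in \eqref{eq:ddotell} vanishes at $q$: if $\partial_\ell\omega(q)=0$ then membership in $K$ forces $-\partial_\ell N/N+\partial_\ell R/R=0$ there, so any choice works (e.g.\ $L=0$, $E=1$); if $\partial_\ell\omega(q)\neq0$ impose the single linear relation $L/(E-\omega L)=\bigl(-\partial_\ell N/N+\partial_\ell R/R\bigr)/\partial_\ell\omega$ at $q$, which always admits a solution with $E-\omega L\neq0$ (should the naive choice give $E-\omega L=0$ — possible only inside the ergoregion — take $E=0$ instead). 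The defining inequality of $K$ is precisely the condition that the right-hand side of \eqref{eq:nullconstraint} be non-negative at $q$, so a real $\dot\vartheta$ exists; together with $\dot\ell=0$ and the formulas for $\dot t,\dot\varphi$ this yields a nonzero lightlike vector $X$ at $q$, and the unique geodesic with $\dot\lambda(0)=X$ satisfies $\dot\ell=\ddot\ell=0$ at $q$ by construction and by \eqref{eq:ddotell}.

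The step I expect to be the main obstacle is the computation yielding \eqref{eq:ddotell}: the contraction $\partial_\ell g_{\mu\nu}\dot x^{\mu}\dot x^{\nu}$ produces several terms, and one must substitute the constants of motion and invoke \eqref{eq:nullconstraint} in just the right way for the $\partial_\ell(R^{2}\sin^{2}\vartheta)$ pieces to cancel, leaving exactly the combination that also governs $\partial_\ell\Psi_\pm$. A lesser nuisance is the case-checking in (c) needed to keep $E-\omega L\neq0$, in particular inside the ergoregion.
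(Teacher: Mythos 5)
Your proof is correct, and it reaches the same key formula as the paper by a different computational route. The paper works with an orthonormal tetrad $E_0,\dots,E_3$, writes the lightlike direction at a critical point of $\ell$ as $X=E_0+\cos\alpha\,E_3+\sin\alpha\,E_2$, and evaluates $XX\ell=g(\nabla_XE_1,X)$ via the Lie brackets, obtaining $XX\ell=\tfrac{\partial_\ell R}{R}-\tfrac{\partial_\ell N}{N}-\cos\alpha\,\tfrac{R\,\partial_\ell\omega\sin\vartheta}{N}$ with $\cos\alpha\in[-1,1]$; parts (a), (b) then follow from Proposition~\ref{prop:inout}(a) and part (c) from choosing $\cos\alpha$ to kill the right-hand side. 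You instead use the two Killing vector fields, the constants of motion $E$ and $L$, and the coordinate form of the geodesic equation (valid because $g_{\ell\ell}=1$ and $g_{\ell\mu}=0$ otherwise), arriving at your \eqref{eq:ddotell}; your constraint $\bigl|L/(E-\omega L)\bigr|\le R\sin\vartheta/N$ at a critical point of $\ell$ is exactly the paper's $|\cos\alpha|\le 1$ under the identification $\cos\alpha=NL/\bigl(R\sin\vartheta\,(E-\omega L)\bigr)$, and your prefactor $(E-\omega L)^2/N^2$ is strictly positive by your observation that $\dot t\neq0$ on a lightlike geodesic, so the two formulas are equivalent. The tetrad route makes part (c) slightly cleaner (one real parameter ranging over a closed interval, no case distinctions about $E-\omega L$), while your route makes the connection to the conserved quantities explicit, which is useful if one wants to discuss effective potentials; your extra case-checking in (c) is harmless but unnecessary if one parametrizes initial data by $(E-\omega L,L)$ directly. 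The one computation you flagged as a risk, the cancellation of the $\partial_\ell(R^2\sin^2\vartheta)$ contributions after eliminating $R^2\dot\vartheta^2$, does go through as you state.
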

\begin{proof}
Let $X$ be a lightlike and geodesic vector field on $(M,g)$ i.e $g(X,X)=0$ and $\nabla_X X=0$. For proving 
(a) and (b) we have to demonstrate that the implication
\begin{equation}
X\ell=0 \implies XX\ell>0
\label{eq:posMin}
\end{equation}
is true at all points of $M_{\mathrm{out}}$ and that the implication
\begin{equation}
X\ell=0 \implies XX\ell<0
\label{eq:negMout}
\end{equation}
is true at all points of $M_{\mathrm{in}}$. 
The equation  $\nabla_X X=0$ implies 
\begin{equation}
XX \ell= X d\ell(X)= X(g(E_1,.))=g(\nabla_X E_1,.)
\end{equation}
where 
\begin{equation}
X=E_0+\cos\alpha E_3+\sin\alpha E_2
\end{equation}
then 
\begin{equation}
\begin{aligned}
XX \ell&=g(\nabla_{E_0} E_1,E_0)+\sin\alpha (g(\nabla_{E_2} E_1,E_0)+g(\nabla_{E_0} E_1,E_2))\\
&+\cos\alpha(g(\nabla_{E_3} E_1,E_0)+g(\nabla_{E_0} E_1,E_3))
+\sin^2\alpha g(\nabla_{E_2} E_1,E_2)\\
&+\cos^2\alpha g(\nabla_{E_3} E_1,E_3)\\
&=g([E_0,E_1],E_0)+\sin\alpha(g([E_2,E_1],E_0)+g([E_0,E_1],E_2))\\
&+\cos\alpha(g([E_3,E_1],E_0)+g([E_0,E_1],E_3))+\sin^2\alpha g([E_2,E_1],E_2)\\
&+\cos^2\alpha g([E_3,E_1],E_3)
\end{aligned}
\end{equation}
with the use of the Lie brackets, we find 
\begin{equation}
XX\ell=\dfrac{\partial_{\ell}R}{R}-\dfrac{\partial_{\ell}N}{N}-\cos\alpha \, \Big(\dfrac{R\;\partial_{\ell}\omega\sin\vartheta}{N} \Big)
\label{pro1}
\end{equation}
If $\cos\alpha$ runs through all possible values from $-1$ to $1$, the right-hand side of \eqref{pro1} stays positive on $M_{\mathrm{out}}$ and negative on $M_{\mathrm{in}}$, by Proposition \ref{prop:inout}. This proves part (a) and part (b). At every point of $K$ there is a value of $\cos\alpha$ such that the right-hand side of \eqref{pro1} vanishes. This proves part (c) of the proposition.  
\hfill $\blacksquare$
\end{proof}

We have already emphasised that in our wormhole spacetimes the Hamilton-Jacobi equation for lightlike geodesics is \emph{not} in general separable. In the special case that it is separable, i.e., in the case that a generalised Carter constant exists, part (c) of Proposition \ref{proposition2} implies that through each point of the region $K$ there is a \emph{spherical} lightlike geodesic, i.e., a lightlike geodesic that stays on a sphere $\ell=\,$constant. In this case one would call $K$ the \emph{photon region}. However, we do \emph{not} restrict to this special case in the following. 

\section{MULTIPLE IMAGING IN THE WORMHOLE SPACETIME}\label{sec:multiple}
In this part we want to apply Morse theory to get some information about the past-pointing lightlike geodesics from a point $p$ to a timelike curve $\gamma$ in the wormhole spacetime $(M,g)$. We first prove a proposition that characterises a region to which all lightlike geodesics between $p$ and $\gamma$ are confined. 
\begin{proposition}
Let $p$ be an event and $\gamma$ a past-inextendible timelike curve on which $|\ell |$ remains bounded if the time coordinate $t$ goes to $- \infty$ along $\gamma$. Let $\Lambda$ be the smallest shell $\ell _1 \leq \ell \leq \ell _2$ which contains $p$, $\gamma$ and the region $K$ defined in Definition \ref{def:inout}. Then every past-oriented  lightlike geodesic from $p$ to $\gamma$ is contained within $\Lambda$. 
\label{pro3}
\end{proposition}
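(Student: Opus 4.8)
The plan is to follow the radial coordinate $\ell$ along a past-oriented lightlike geodesic and to show it can never leave the interval $[\ell_1,\ell_2]$. Let $\lambda\colon[0,s_1]\to M$ be such a geodesic with $\lambda(0)=p$ and $\lambda(s_1)$ on $\gamma$, and put $u(s):=\ell(\lambda(s))$. Since $\lambda$ is past-oriented, its endpoint $\lambda(s_1)$ lies in the causal past of $p$, hence at a time coordinate $t\le t(p)$; by the hypothesis on $\gamma$ together with continuity, the part of $\gamma$ with $t\le t(p)$ --- which is all that matters for lightlike geodesics issuing from $p$ --- stays in a bounded range of $\ell$. Together with Proposition \ref{prop:inout}(b), which confines $K$ to a bounded $\ell$-range, this makes $\Lambda=\{\ell_1\le\ell\le\ell_2\}$ a genuine finite shell and gives $u(0)\in[\ell_1,\ell_2]$ and $u(s_1)\in[\ell_1,\ell_2]$.

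The first substantial step is a lemma stating that the complement of $\Lambda$ is ``clean'': $\{\ell>\ell_2\}\subseteq M_{\mathrm{out}}$ and $\{\ell<\ell_1\}\subseteq M_{\mathrm{in}}$. A point $q$ with $\ell(q)>\ell_2$ cannot lie in $K$, because $K\subseteq\Lambda$; so it lies in $M_{\mathrm{in}}\cup M_{\mathrm{out}}$. If it were in $M_{\mathrm{in}}$ then $\partial_\ell\Psi_+<0$ at $q$, whereas \eqref{eq:asyPsi1} gives $\partial_\ell\Psi_+>0$ for large $\ell$ at the same value of $\vartheta$; the intermediate value theorem then produces a point with $\partial_\ell\Psi_+=0$ at some $\ell>\ell(q)>\ell_2$, and such a point belongs to $K$ --- contradicting $\sup_K\ell\le\ell_2$. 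The inclusion $\{\ell<\ell_1\}\subseteq M_{\mathrm{in}}$ is the mirror-image argument, using \eqref{eq:asyPsi2}.

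With the lemma available, the conclusion follows from Proposition \ref{proposition2} by a maximum-principle argument. Suppose $u(s)>\ell_2$ for some $s$. Because $u(0)\le\ell_2$ and $u(s_1)\le\ell_2$, the maximum of $u$ over $[0,s_1]$ is attained at an interior point $s_0$, where $u'(s_0)=0$ and $u''(s_0)\le 0$, and $u(s_0)>\ell_2$ places $\lambda(s_0)\in M_{\mathrm{out}}$ by the lemma. But $\lambda$ is a lightlike geodesic, so the implication \eqref{eq:posMin} of Proposition \ref{proposition2}(a), applied at $\lambda(s_0)$, forces $u''(s_0)>0$ --- a contradiction. Symmetrically, if $u(s)<\ell_1$ somewhere, then $u$ has an interior minimum at a point of $M_{\mathrm{in}}$, where \eqref{eq:negMout} of Proposition \ref{proposition2}(b) gives $u''<0$ while the minimum requires $u''\ge 0$ --- again a contradiction. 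Hence $\ell_1\le u(s)\le\ell_2$ for all $s$, i.e. $\lambda$ is contained in $\Lambda$.

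The step I expect to be the main obstacle is the lemma --- ruling out that $M_{\mathrm{in}}$ or $M_{\mathrm{out}}$ ``bleeds'' beyond the shell $\Lambda$; it rests entirely on the fact that a zero of $\partial_\ell\Psi_\pm$ lies in $K$ (so $K$ is the wall separating $M_{\mathrm{in}}$ from $M_{\mathrm{out}}$) and on the asymptotic estimates \eqref{eq:asyPsi1}--\eqref{eq:asyPsi2}, which pin down the sign of $\partial_\ell\Psi_\pm$ at both ends. The maximum-principle step is then little more than a re-reading of Proposition \ref{proposition2}, and the only remaining care is the bookkeeping at the endpoints of $\lambda$ and the harmless degenerate cases in which an extremum of $u$ is attained on a whole subinterval.
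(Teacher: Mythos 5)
Your proof is correct and follows essentially the same route as the paper, which argues in two sentences that a geodesic leaving and re-entering $\Lambda$ would need a maximum of $\ell$ in $M_{\mathrm{out}}$ or a minimum in $M_{\mathrm{in}}$ (impossible by Proposition \ref{proposition2}); your ``clean complement'' lemma just makes explicit the step the paper delegates to Proposition \ref{prop:inout}, namely that outside $\Lambda$ one is genuinely in $M_{\mathrm{out}}$ (for $\ell>\ell_2$) or $M_{\mathrm{in}}$ (for $\ell<\ell_1$), which you establish correctly via the sign of $\partial_\ell\Psi_\pm$ at infinity and the fact that any zero of $\partial_\ell\Psi_\pm$ lies in $K$.
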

\begin{proof}
By Proposition \ref{prop:inout}, along a lightlike geodesic that leaves and re-enters $\Lambda$ the radius coordinate $\ell$ must have either a maximum in the region $M_{out}$ or a minimum in the region $M_{in}$. Proposition \ref{proposition2} makes sure that this cannot happen.
\hfill $\blacksquare$
\end{proof}

We now use Uhlenbeck's theorem for determining the number of lightlike geodesics between $p$ and $\gamma$. As Morse theory applies only to functions for which the Hessian is non-degenerate at all critical points, we have to require that $\gamma$ does not meet the caustic of the past light-cone of $p$, i.e., that there is no past-pointing lightlike geodesic from $p$ which meets $\gamma$ in a point conjugate to $p$. 
\begin{proposition}
Consider, in the wormhole spacetime $(M,g)$, a point $p$ and a smooth future-pointing timelike curve $\gamma:]-\infty,\tau_a[ \to M$, with $-\infty<\tau_a \le \infty$, which is parametrised such that the t-coordinate of the point $\gamma(\tau)$ is equal to $\tau$. Assume 
\begin{itemize}
\item that $\gamma$ does not meet the caustic of the past light-cone of $p$, and
\item that for $\tau \to -\infty$ the radius coordinate $\ell$ of the point $\gamma(\tau)$ does not go to $-\infty$ or $+ \infty$.
\end{itemize}
Then there is an infinite sequence $(\lambda_n)_{n \in \mathbb{N} }$ of mutually different past-pointing lightlike geodesics from $p$ to $\gamma$. For $n \to \infty$, the index of $\lambda_n$ goes to infinity. Moreover, if we denote the point where $\lambda_n$ meets the curve $\gamma$ by $\gamma(\tau_n)$, then $\tau_n \to -\infty$ for $n \to \infty$.
\label{pro4}
\end{proposition}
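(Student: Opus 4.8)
The plan is to check that Uhlenbeck's theorem (Theorem~\ref{theorem}) applies, to combine the resulting Morse inequalities with Serre's theorem exactly as in the remark following Theorem~\ref{theorem}, and finally to get the statements about the index and about $\tau_n$ from a compactness argument in the variational framework underlying that theorem. By Propositions~\ref{prop:ghyp} and~\ref{prop:growth}, $(M,g)$ is globally hyperbolic and carries an orthogonal splitting satisfying the metric growth condition, namely the one built in the proof of Proposition~\ref{prop:growth} from the spatial coordinates $x^{1}=\ell$, $x^{2}=\vartheta$, $x^{3}=\varphi-\omega t$. In these coordinates $\gamma$ is vertical, $\gamma(\tau)=(\beta(\tau),\tau)$ with $\beta(\tau)=\big(\ell(\tau),\vartheta(\tau),\varphi(\tau)-\omega(\ell(\tau),\vartheta(\tau))\,\tau\big)$. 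By the second assumption there are $\tau_{0}$ and $\ell_{0}$ with $|\ell(\tau)|\le\ell_{0}$ for all $\tau\le\tau_{0}$; since $\vartheta$ and $x^{3}$ always lie in the compact factor $S^{2}$ of $\Sigma=\mathbb{R}\times S^{2}$, the point $\beta(\tau)$ then stays in the compact set $[-\ell_{0},\ell_{0}]\times S^{2}$ for $\tau\le\tau_{0}$. Hence every sequence $\tau_{i}\to-\infty$ has a subsequence along which $\beta(\tau_{i})$ converges in $\Sigma$. Together with the first assumption ($\gamma$ does not meet the caustic of the past light cone of $p$), all hypotheses of Theorem~\ref{theorem} are fulfilled, so the Morse inequalities $N_{\kappa}\ge B_{\kappa}$ hold, with $B_{\kappa}$ the $\kappa$th Betti number of the loop space $L(M)$.

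\textbf{Extracting the sequence.} Since $M=\mathbb{R}\times\Sigma=\mathbb{R}^{2}\times S^{2}$ is homotopy equivalent to $S^{2}$, which is simply connected but not contractible, Serre's theorem~\cite{Serre1951} gives, exactly as noted after Theorem~\ref{theorem}, that $B_{\kappa}>0$ for all but finitely many $\kappa\in\mathbb{N}_{0}$; hence $N_{\kappa}\ge B_{\kappa}>0$ for all but finitely many $\kappa$. For each such $\kappa$ I pick a past-pointing lightlike geodesic from $p$ to $\gamma$ of index precisely $\kappa$. Writing the cofinite set of these values as $\{\kappa_{1}<\kappa_{2}<\dots\}$ and calling $\lambda_{n}$ the geodesic picked for $\kappa=\kappa_{n}$, I obtain a sequence $(\lambda_{n})_{n\in\mathbb{N}}$ of past-pointing lightlike geodesics from $p$ to $\gamma$; they are mutually different because their indices differ, and their indices $\kappa_{n}$ tend to $+\infty$.

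\textbf{Showing $\tau_{n}\to-\infty$.} I argue by contradiction. If $\tau_{n}\not\to-\infty$, then after passing to a subsequence I may assume $\tau_{n}\in[\tau_{*},t(p)]$ for some $\tau_{*}>-\infty$; the upper bound holds because in the orthogonal splitting above $g^{tt}=-1/N^{2}<0$, so the coordinate $t$ strictly decreases along every past-pointing causal curve. The metric growth condition, established in Proposition~\ref{prop:growth}, is precisely the hypothesis under which the functional $\mathcal{T}$ of Uhlenbeck's framework — the function on the space of past-pointing lightlike curves from $p$ to $\gamma$ (homotopy equivalent to $L(M)$) whose critical points are the past-pointing lightlike geodesics from $p$ to $\gamma$, whose Morse index at such a geodesic equals the geodesic's index, and whose value there equals the parameter $\tau$ at which the geodesic meets $\gamma$ — satisfies the Palais--Smale condition. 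Applied to the critical sequence $(\lambda_{n})$, whose critical values $\tau_{n}$ are bounded, this yields a subsequence converging to a critical point $\lambda_{\infty}$, i.e. to a past-pointing lightlike geodesic from $p$ to $\gamma$. Since $\gamma$ avoids the caustic of $p$, the critical point $\lambda_{\infty}$ is nondegenerate, hence isolated, so $\lambda_{n}=\lambda_{\infty}$ for all large $n$, contradicting that the $\lambda_{n}$ are mutually different. Therefore $\tau_{n}\to-\infty$.

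\textbf{The main obstacle.} The delicate step is the last one. A more hands-on substitute uses Proposition~\ref{pro3}: all the $\lambda_{n}$ stay in a fixed shell $\Lambda$, and in the subsequence with $\tau_{n}\in[\tau_{*},t(p)]$ they all lie in the compact set $C=\Lambda\cap\{\tau_{*}\le t\le t(p)\}$, on which the curvature is bounded; a Jacobi-field comparison then separates consecutive conjugate points along any lightlike geodesic by a fixed amount of affine parameter (in the normalization $g(\dot\lambda,E_{0})|_{p}=-1$), so $\kappa_{n}\to\infty$ forces the affine length of $\lambda_{n}$ to tend to infinity. Converting this into $|t(p)-\tau_{n}|\to\infty$ requires bounding $dt/ds=g(\dot\lambda,E_{0})/N=(\omega L-E)/N^{2}$ away from zero on $C$, where $E=-g(\dot\lambda,\partial_{t})$ and $L=g(\dot\lambda,\partial_{\varphi})$ are conserved along $\lambda_{n}$; this is immediate outside the ergoregion, where $\partial_{t}$ is timelike and a lightlike vector cannot be orthogonal to it, but needs extra care inside the ergoregion — which is why invoking the Palais--Smale condition is the cleaner route, and where I expect the real work to lie.
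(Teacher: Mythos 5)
Your overall skeleton coincides with the paper's: verify global hyperbolicity and the metric growth condition via Propositions \ref{prop:ghyp} and \ref{prop:growth}, invoke Theorem \ref{theorem}, and combine the Morse inequalities with Serre's theorem for the loop space of $S^2\times\mathbb{R}^2$ to extract infinitely many geodesics of unbounded index. Two remarks on that part. First, Theorem \ref{theorem} is stated for a curve defined on all of $\mathbb{R}$, whereas the proposition's $\gamma$ lives on $]-\infty,\tau_a[$ with possibly $\tau_a<\infty$; the paper therefore first extends $\gamma$ to a curve $\gamma'$ agreeing with $\gamma$ for $\tau<\tau_b$ and then uses $\tau_n\to-\infty$ to conclude that almost all geodesics actually land on the original piece. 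You skip this extension; it is a small but genuine mismatch with the hypotheses of the theorem as stated, and your own final step would repair it in exactly the paper's way. Second, your claim that $|\ell(\gamma(\tau))|\le\ell_0$ for all $\tau\le\tau_0$ is slightly stronger than what ``$\ell$ does not go to $\pm\infty$'' gives (think of an oscillating, unbounded $\ell$); but Theorem \ref{theorem} only requires \emph{some} sequence $\tau_i\to-\infty$ with $\beta(\tau_i)$ convergent, which does follow, so nothing is lost.

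Where you genuinely diverge is the proof that $\tau_n\to-\infty$. The paper argues geometrically: if the $\tau_n$ were bounded below, a subsequence of $\gamma'(\tau_n)$ would converge to a point $q$ on $\gamma'$ lying on the past light cone of $p$, which near $q$ is an immersed lightlike hypersurface (caustic avoidance); a timelike curve meets such a hypersurface in isolated points, so the sequence is eventually constant, and one then rules out infinitely many distinct lightlike geodesics from $p$ to the single point $q$. You instead go under the hood of Uhlenbeck's machinery: the metric growth condition yields the Palais--Smale condition for the arrival-time functional on the sublevel set $\{\tau\ge\tau_*\}$, a sequence of critical points with bounded critical values subconverges to a critical point, and nondegeneracy (again caustic avoidance) makes that limit isolated, contradicting distinctness. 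Both arguments are compactness-plus-isolatedness arguments and both are correct; yours is cleaner in that it avoids the paper's somewhat delicate final step about geodesics accumulating at $q$, but it leans on analytic facts (condition (C), the identification of critical values with arrival times, the index theorem) that are inside Uhlenbeck's paper rather than in Theorem \ref{theorem} as quoted here, so it trades geometric elementarity for reliance on the variational framework. Your closing paragraph correctly identifies the ergoregion as the place where the ``hands-on'' curvature-comparison substitute would need extra care; the paper sidesteps this entirely, which is a point in favour of either its light-cone argument or your Palais--Smale route.
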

\begin{proof}
In Proposition \ref{prop:ghyp} we have proven that the wormhole spacetime $(M,g)$ is globally hyperbolic and in Proposition \ref{prop:growth} we have shown that there is an orthogonal splitting, $M=\Sigma \times \mathbb{R}$ with $\Sigma \simeq S^2 \times \mathbb{R}$, such that the metric growth condition is satisfied. We now extend $\gamma$ to a curve that is defined for all time. More precisely, we choose a timelike curve $\gamma ' : \mathbb{R} \to M$ which takes the form $\gamma' (\tau ) = \big( \beta ' ( \tau ) , \tau \big)$ with respect to the orthogonal splitting such that $\gamma ' ( \tau ) = \gamma ( \tau )$ for $- \infty < \tau <  \tau _b$ with some $\tau _b \le \tau _a$. Our assumptions on $\gamma$ make sure that we can choose $\gamma '$ such that it does not meet the caustic of the past light-cone of $p$ and that $\big\{ \beta ' ( \tau ) \big| - \infty < \tau < \tau _b \big\}$ is confined to a compact subset of $\Sigma$. The latter property implies that for every sequence $ ( \tau _i ) _{i \in \mathbb{N}}$ the sequence $\big( \beta ' ( \tau _i ) \big) _{i \in \mathbb{N}}$ has a convergent subsequence. As a consequence, Uhlenbeck's theorem gives us the Morse inequalities $N_k' \ge B_k$, where $N_k'$ is the number of past-pointing lightlike geodesics from $p$ to $\gamma '$ with index $k$ and $B_k$ is the $k$th Betti number of the loop space of $M$. As $M \simeq S^2 \times \mathbb{R}^2$ is simply connected but not contractible to a point, a theorem by Serre \cite{Serre1951} implies that $B_k >0$ for all but finitely many $k$. We have thus proven that there is a past-pointing lightlike geodesic from $p$ to $\gamma '$ with index $k$ for all but finitely many $k$. In other words, there is an infinite sequence of mutually different past-pointing lightlike geodesics $(\lambda _n) _{n \in \mathbb{N}}$ from $p$ to $\gamma '$ such that the index of $\lambda _n$ goes to infinity for $n \to \infty$. We will now show that the $\tau _n$, as defined in the proposition, cannot be bounded below, i.e., that there is a subsequence such that $\tau _n \to - \infty$. This will also imply that $\tau _n < \tau _b$ for almost all $n$, i.e., that almost all $\lambda _n$ arrive at $\gamma$. By contradiction, let us assume that there is a lower bound for the $\tau _n$. As there is obviously an upper bound for the $\tau _n$, given by the time coordinate of the event $p$, this would imply that the $\tau _n$ are confined to a compact interval, so there would be a subsequence of the sequence $ \big( \gamma ' ( \tau _n ) \big)$ that converges to a point $q$ on $\gamma '$. This would give us a converging sequence of points that lie on the timelike curve $\gamma '$ and also on the past light-cone of $p$ which is an immersed lightlike submanifold near $q$ by assumption. This is possible only if this is a constant sequence. This would give us infinitely many mutually different lightlike geodesics $\lambda _n$ from $p$ to $q$. As there is a unique lightlike direction tangent to the light-cone at $p$, and as there are no periodic lightlike geodesics in a globally hyperbolic spacetime, this is impossible. 
\hfill $\blacksquare$
\end{proof}

Proposition \ref{pro4} makes sure that in the wormhole spacetime an observer at $p$ sees infinitely many images of a light source with worldline $\gamma$, under very mild restrictions on $\gamma$. Moreover, it implies that the past light-cone of every point $p$ must have a nonempty and rather complicated caustic because otherwise it would not be possible to find a sequence of past-pointing lightlike geodesics $\lambda_n$ from $p$ that intersect this caustic arbitrarily often for $n$ sufficiently large. 

The fact that $\tau _n \to - \infty$ means that the travel time of light goes to infinity; therefore, the images become fainter and fainter with increasing $n$. For such infinite sequences of light rays in the Schwarzschild spacetime, Ohanian \cite{Ohanian1987} has shown that the intensity decreases exponentially. The situation in wormhole spacetimes is quite similar. Typically, the two images with the shortest travel
time are brighter than all the infinitely many other ones combined. So there is no significant accumulation of photon energy at the observer. The situation is a bit different if $\gamma$ passes through the caustic of the past light-cone of $p$. (This situation had to be excluded for applications of Morse theory.) Then the gravitational field focusses the light towards the observer and the energy \emph{density} of the light at the observer may be quite high. In the ray optical approximation it is even infinite, whereas a wave-optical treatment shows that it is always finite. This was quantitatively worked out, again for the Schwarzschild spacetime, in another paper by Ohanian \cite{Ohanian1974}.

In the next proposition, we show that all past-pointing lightlike geodesics from $p$ to $\gamma$ come actually arbitrarily close to $K$.
\begin{proposition}
Let $W$ be any open subset in $M$ that contains the region $K$. Then all but finitely many past-pointing lightlike geodesics from $p$ to $\gamma$ intersect $W$.
\label{prop:Kapproach}
\end{proposition}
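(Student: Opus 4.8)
The plan is a proof by contradiction combined with a limit-curve argument. Suppose that infinitely many past-pointing lightlike geodesics from $p$ to $\gamma$ avoid $W$. As in the proof of Proposition \ref{pro4} one may then pick such an infinite family $(\mu_j)_{j\in\mathbb{N}}$ of mutually distinct geodesics whose endpoints on $\gamma$ have $t$-coordinates tending to $-\infty$. We may assume $p\notin W$ and that each endpoint lies outside $W$, since otherwise the corresponding geodesic trivially meets $W$. Then each $\mu_j$ is a connected subset of $M\setminus W$, and since $W\supseteq K$ and $M\setminus K=M_{\mathrm{in}}\sqcup M_{\mathrm{out}}$ with $M_{\mathrm{in}},M_{\mathrm{out}}$ open, $\mu_j$ lies entirely in whichever of $M_{\mathrm{in}},M_{\mathrm{out}}$ contains $p$; say this is $M_{\mathrm{out}}$ (the other case is symmetric, with part (b) of Proposition \ref{proposition2} replacing part (a)). By Proposition \ref{proposition2}(a) the coordinate $\ell$ then has at most one critical point along each $\mu_j$ --- a strict local minimum --- so $\ell$ is confined to one fixed compact interval $[\ell_1,C]$ along all of them.

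The next step is to extract a limiting geodesic. Normalise affine parameters so that $\mu_j(0)=p$ and $\dot\mu_j(0)$ has unit length in some auxiliary Riemannian metric on $M$; then the conserved energy and azimuthal angular momentum of $\mu_j$ are bounded uniformly in $j$, so $|dt/ds|$ is bounded along all of them, whence the affine lengths of the $\mu_j$ diverge (the $t$-drop from $p$ to the endpoint diverges). Passing to a subsequence along which the past-pointing null directions $\dot\mu_j(0)$ converge --- the set of such directions is a compact $2$-sphere --- and invoking continuous dependence of geodesics on their initial data, we obtain a lightlike geodesic $\lambda\colon[0,\infty)\to M$ with $\lambda(0)=p$. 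As a limit of curves in the closed set $M\setminus W$ it lies in $M\setminus W$, hence, since $\partial M_{\mathrm{out}}\subseteq K\subseteq W$ and $\lambda$ is connected, in $M_{\mathrm{out}}$; and $\ell$ stays in $[\ell_1,C]$ along $\lambda$. Because the hypersurfaces $\{t=\mathrm{const}\}$ are Cauchy surfaces for the orthogonal splitting of Proposition \ref{prop:growth}, the past-inextendible curve $\lambda$ crosses all of them, so $t\to-\infty$ along $\lambda$.

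It remains to contradict the existence of $\lambda$, and this is where the real work lies. Along $\lambda$ the function $\ell$ is bounded and, by Proposition \ref{proposition2}(a), has at most one critical point; hence it is eventually monotone and converges, and since $XX\ell$ is bounded this forces $X\ell\to0$; moreover $XX\ell=\ddot\ell$ is integrable over $[0,\infty)$, so $XX\ell\to0$ along some sequence $s_k\to\infty$. Inserting at the points $\lambda(s_k)$ the expression for $XX\ell$ along a lightlike geodesic that underlies formula \eqref{pro1} --- where both $X\ell$ and $XX\ell$ tend to $0$, and where the photon energy relative to $E_0$ stays bounded away from zero --- and comparing with the characterisation of $M_{\mathrm{out}}$ in Proposition \ref{prop:inout}(a), one finds that $(\ell,\vartheta)(\lambda(s_k))$ must approach the locus where that characterising inequality degenerates, namely $\partial M_{\mathrm{out}}\subseteq K$. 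Thus $\lambda$ comes arbitrarily close to $K$, which contradicts $\lambda\subseteq M\setminus W$ because $W$ is a neighbourhood of $K$.

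The step I expect to be the main obstacle is this last paragraph, for two reasons. First, in passing to the limit in \eqref{pro1} one has to control degeneracies: one must verify that the photon energy of $\lambda$ relative to $E_0$ does not tend to zero (otherwise the vanishing of $XX\ell$ carries no information about the bracket in \eqref{pro1}), and, if $K$ reaches the rotation axis, the coordinate singularities at $\vartheta\in\{0,\pi\}$ must be treated separately. Second, ``$\lambda$ comes arbitrarily close to $K$'' conflicts with ``$\lambda\subseteq M\setminus W$'' immediately only when $W$ contains a uniform neighbourhood of $K$; for a completely general open $W\supseteq K$ --- which, since $t$ is noncompact, may pinch towards $K$ at large $|t|$ --- one must either exploit the $\partial_\varphi$-invariance of $K$ (its $\varphi$-orbits are compact) and the $t$-independence of $N,R,\omega$ to reduce to an invariant $W$, or run the accumulation argument in the $(\ell,\vartheta)$-projection, where $K$, $M_{\mathrm{out}}$ and the inequality of Proposition \ref{prop:inout}(a) all live and where $\lambda(s_k)$ is shown to converge onto $K$. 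Apart from these points the argument is a routine compactness argument resting on Propositions \ref{prop:inout} and \ref{proposition2}.
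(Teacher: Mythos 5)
Your argument is essentially the paper's: both extract a limiting null direction at $p$ from the compact celestial sphere of past-pointing null vectors, confine the resulting limit geodesic radially via Proposition \ref{pro3}, and then use Propositions \ref{prop:inout} and \ref{proposition2} to force $X\ell\to 0$ and $XX\ell\to 0$ and hence an approach to $K$. Your reorganisation (arguing by contradiction and observing that a $W$-avoiding geodesic is trapped in one component of $M\setminus K=M_{\mathrm{in}}\sqcup M_{\mathrm{out}}$) merely collapses the paper's two-case analysis into its second case, and the one step you flag as delicate --- passing from ``comes arbitrarily close to $K$'' to ``meets the open set $W$'' --- is precisely the step the paper also asserts without further elaboration.
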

\begin{proof} 
We choose the same orthogonal splitting as in the proof of Proposition \ref{pro4} and denote the projection onto the time axis by $t$. Then the sequence $\big( \lambda _n \big)_{n \in \mathbb{N}}$ of lightlike geodesics from Proposition \ref{pro4} gives us a sequence of lightlike vectors $( w_n )_{n \in \mathbb{N}}$ with $dt (w_n)=-1$ at $p$ and a sequence of parameter values $(s_n)_{n \in \mathbb{N}}$ such that $\mathrm{exp} (s_n w_n)$ is on $\gamma$ for all $n \in \mathbb{N}$. The set of all lightlike vectors $w$ at $p$ with $dt (w)=-1$ form a $2$-sphere; by compactness, a subsequence of $(w_n)_{n \in \mathbb{N}}$ converges towards a lightlike vector $w_{\infty}$. Along the lightlike geodesic $s \mapsto \mathrm{exp} (s w_{\infty})$ the time coordinate $t$ must go to $- \infty$ by Proposition \ref{pro4} and the modulus of the radius coordinate $|\ell|$ must be bounded by Proposition  \ref{pro3}. This implies that along this geodesic the function $\ell$ either has a minimum and a maximum or converges towards a limit value $\ell _{\infty}$. In the first case, by Proposition  \ref{prop:minmax} the minimum must lie in $M_{\mathrm{out}}$ or in $K$ and the maximum must lie in $M_{\mathrm{in}}$ or in $K$. This implies that the geodesic intersects $K$ and, thus, $W$ because if neither the minimum nor the maximum is in $K$ then, by continuity, the geodesic must intersect $K$ between these two points. In the second case both the first and the second derivative of $\ell$ with respect to the parameter $s$ must go to 0. As we know from the proof of Proposition \ref{prop:minmax} that the implication (\ref{eq:posMin}) holds on $M_{\mathrm{in}}$ and the implication (\ref{eq:negMout}) holds on $M_{\mathrm{out}}$, the geodesic must come arbitrarily close to $K$, i.e., it must intersect $W$.
\hfill $\blacksquare$
\end{proof}

In the domain of outer communication of a Kerr-Newman black hole a statement analogous to Proposition \ref{prop:Kapproach} is true but it can actually be strengthened, see Hasse and Perlick \cite{Hasse:2005vu}: As the Hamilton-Jacobi equation is separable in the Kerr-Newman case, there is a spherical lightlike geodesic through each point of $K$ and the limiting geodesic $\lambda _{\infty}$ that is constructed in the proof of Proposition \ref{prop:Kapproach} must actually asymptotically spiral towards one of these spherical lightlike geodesics. In the wormhole spacetime it is not in general true that there are spherical lightlike geodesics and the limiting geodesic may oscillate between regions that are far apart from each other forever, see Example 2 below.    

\section{EXAMPLE 1}\label{sec:ex1}
As the first example, we consider the metric \eqref{metric} with 
\begin{equation}
\begin{aligned}
N&=1+\dfrac{(4 a \cos\vartheta)^2}{r_0^3\sqrt{\ell^2+r_0^2}},\\
R&=\sqrt{\ell^2+r_0^2}+\dfrac{(4 a \cos\vartheta)^2}{r_0^3},\\
\omega&=\dfrac{2 a }{(\ell^2+r_0^2)^{3/2}},\\
\end{aligned}
\end{equation}
where $r_0$ is a positive parameter with the dimension of a length and $a$ is a parameter with the dimension of a length squared. As we use units with $c=1$, both our time coordinate $t$ and our radial coordinate $\ell$ have the dimension of a length. 

This wormhole is symmetric with respect to a throat at $\ell =0$, with $r_0$ determining the radius of the throat in the equatorial plane. The parameter $a$ determines the angular momentum of the wormhole; for $a=0$ one gets the spherically symmetric and static Ellis wormhole \cite{Ellis1973}. 

Then 
\begin{equation}
\begin{aligned}
\Psi_{\pm} = -\dfrac{\ell^2+r_0^2 \pm 2 a \sin\vartheta}{(\ell^2+r_0^2)^{3/2}\;\sin\vartheta} \, ,
\end{aligned}
\end{equation}
hence
\begin{equation}
d\Psi_{\pm}=\dfrac{\ell(\ell^2+r_0^2 \pm 6 a\; \sin\vartheta)}{(\ell^2+r_0^2)^{5/2}\;\sin\vartheta}\;d\ell+
\dfrac{\cos\vartheta}{\sin^2\vartheta \sqrt{\ell^2+r_0^2}}\;d\vartheta \, .
\label{dpsi1}
\end{equation}

By \eqref{dpsi1}, photon circles are located where $d\Psi_{\pm}$ vanishes. We assume $a>0$ and we distinguish two cases. 

1) $6 a  \le r_0^2$:
Then $d \Psi _{\pm}$ vanishes if and only if 
\begin{equation}
\vartheta=\dfrac{\pi}{2} \quad \text{and} \quad \ell=0 \, , 
\end{equation}
i.e., we have in this case one co-rotating circular lightlike geodesic at $\ell_{+}^{\mathrm{ph}}=0$ and one counter-rotating circular lightlike geodesic at $\ell_{-}^{\mathrm{ph}}=0$, both of which are in the equatorial plane.

2) $6 a  > r_0^2$:
In this case $d \Psi _{\pm}$ vanishes if and only if  
\begin{equation}
\vartheta=\dfrac{\pi}{2} \quad \text{and} \quad \ell(\ell^2+r_0^2 \pm 6 a\; \sin\vartheta)=0 \, ,
\end{equation}
i.e., we have one co-rotating circular lightlike geodesic at $\ell_{+}^{ph}=0$ and three counter-rotating circular lightlike geodesics at $\ell_{-}^{ph} \in \Big\{0,\pm \sqrt{6 a-r_0^2}\Big\}$ all of which are in the equatorial plane.

For plotting the equipotential surfaces $\Psi _{\pm} = \mathrm{const.}$ and the regions $M_{\mathrm{in}}$, $M_{\mathrm{out}}$, $K_+$ and $K_-$ we take $e^{\ell/r_0}$ as the radial coordinate such that $\ell=-\infty$ corresponds to the origin. Admittedly, this has the slight disadvantage that the symmetry with respect to the throat, $\ell \mapsto -\ell$, is not shown in the diagrams; however, much more importantly it has the great advantage that the entire range $\ell \in \, ]-\infty,+\infty[$ is covered in one plot.
\begin{figure}[H]
  \centering
  \subfloat{\includegraphics[width=0.28\linewidth]{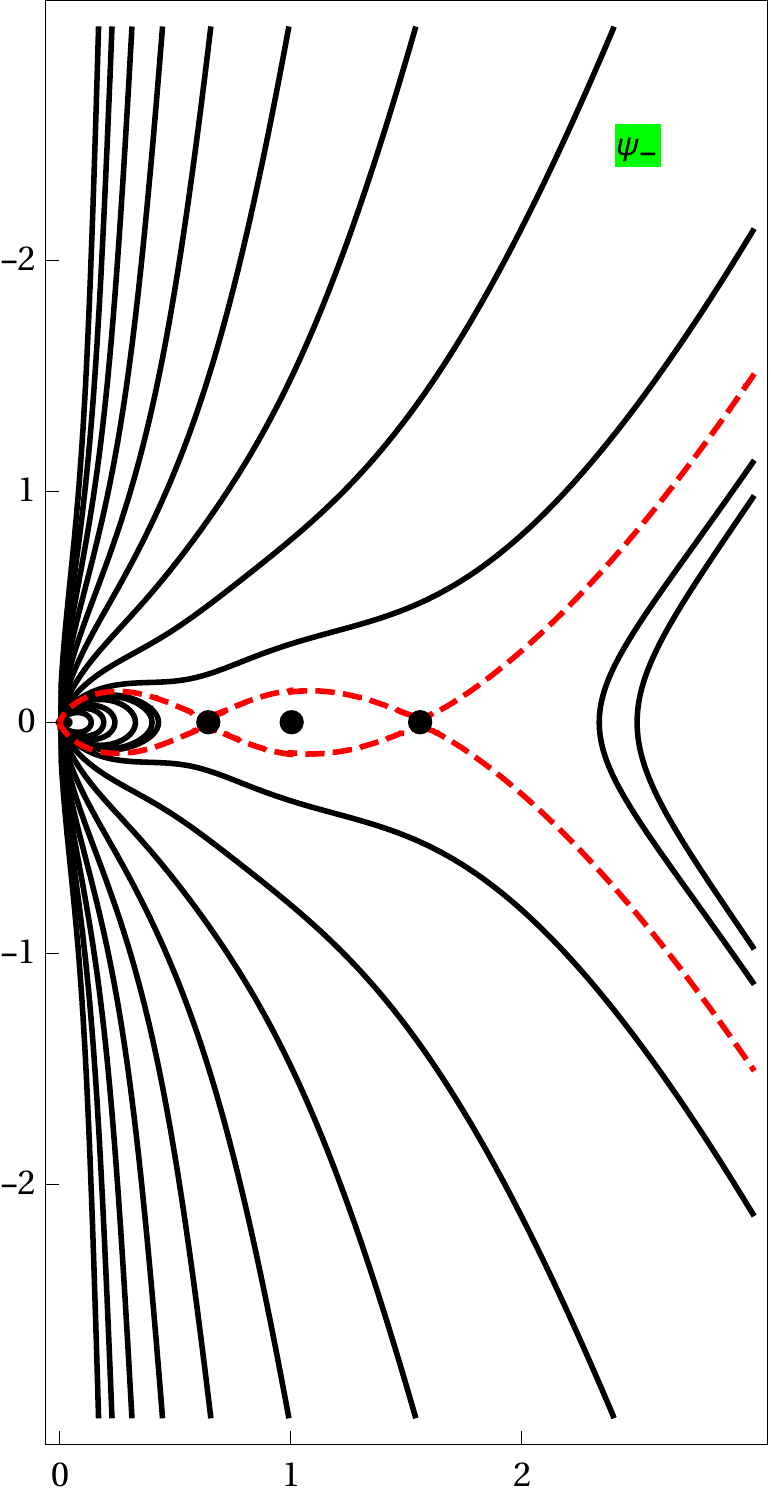}}%
  \qquad
  \subfloat{\includegraphics[width=0.28\linewidth]{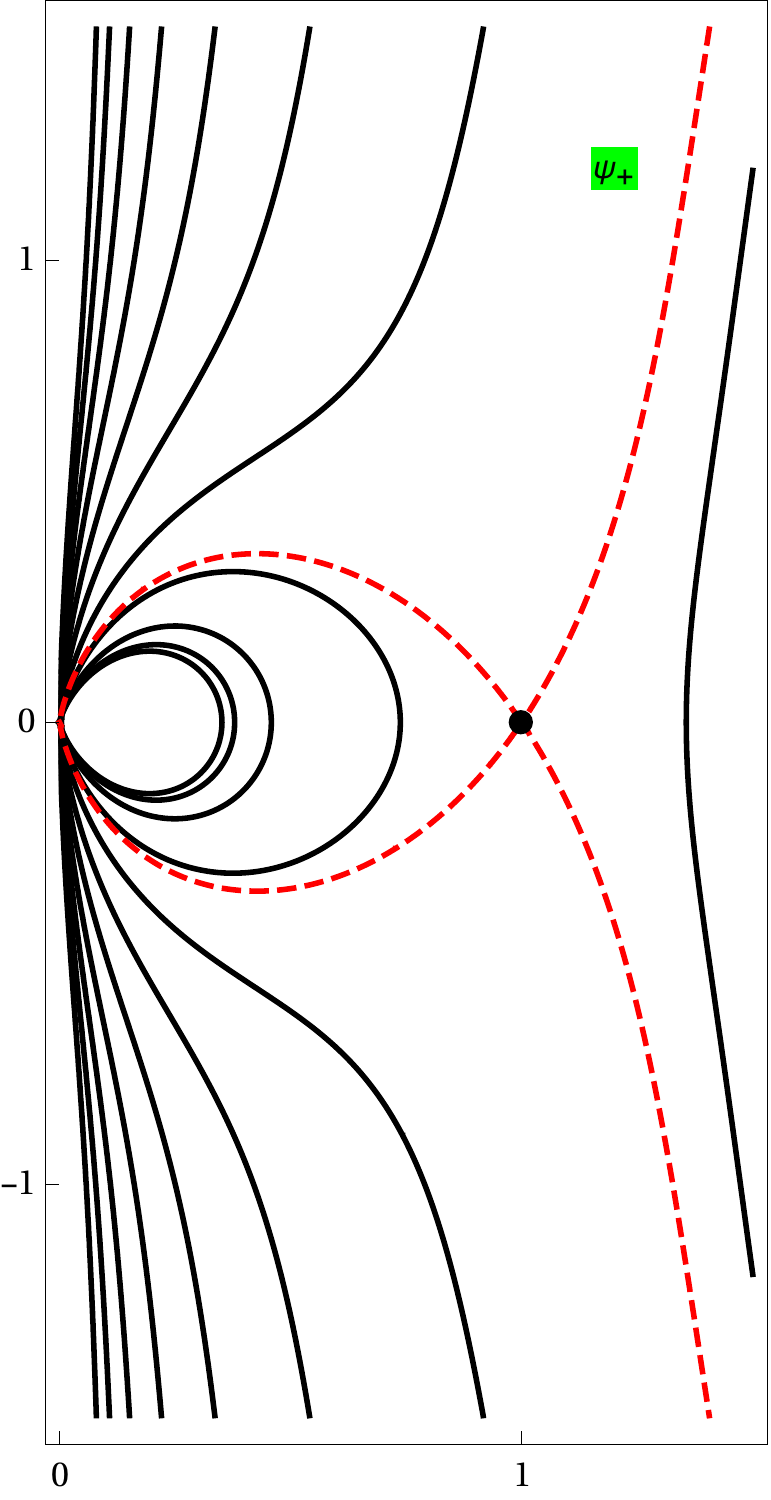}}%
   \caption{The equipotential surfaces $\Psi_{-}=\,$constant (left) and $\Psi_{+}=\,$constant (right) are drawn here for the case that $a=0.2 \, r_0^2$, i.e., $6 \, a^2 >r_0^2$. The picture shows the plane ($\varphi,t)\,$= constant, $e^{\ell/r_0} \sin\vartheta$ on the horizontal and $e^{\ell/r_0} \cos \vartheta$ on the vertical axis. The photon circles are indicated by black dots. The special equipotential surface which goes through a photon circle that is unstable with respect to radial perturbations is drawn as a dashed (red) curve.}%
   \label{psi}
\end{figure}

In Fig. \ref{psi} we show the equipotential surfaces $\Psi _{\pm} = \mathrm{constant}$ for the case that $6 \, a^2 >r_0^2$. As already mentioned above, in this case we find one co-rotating photon circle and three counter-rotating photon circles all of which are in the equatorial plane. Whereas all four photon circles are stable with respect to latitudinal perturbations, only one of them, namely a counter-rotating photon circle, is stable with respect to radial perturbations. This particular photon circle corresponds to a maximum of the potential $\Psi _-$ in the $(\ell, \vartheta)$ plane, whereas the other three photon circles are saddle-points. For each of the saddle-points, we have drawn in Fig.$\,$\ref{psi} the equipotential surface that passes through this photon surface as a dashed (red) curve. The equipotential surface passing through the stable photon circle degenerates, of course, in this picture, to a single point; neighbouring equipotential surfaces are (topological) circles in this picture, i.e., tori in 3-dimensional space.    

Fig.$\,$\ref{fig:Kex1} shows the region $K$, which is the closure of the region $K_+ \cup K_-$, for this first example. The crucial feature of the region $K$ is in the fact that each infinite sequence of past-oriented lightlike geodesics from \emph{any} point $p$ to \emph{any} generic worldline $\gamma$ in $M$ converges to a lightlike geodesic that comes arbitrarily close to $K$. As we read from the picture, in this first example the region $K$ is not very much different from the photon region in the Kerr-Newman spacetime, see Hasse and Perlick \cite{Hasse:2005vu}. The only difference is in the fact that in the wormhole case the region $K$ is separated from  the axis. We will see now in a second example that, quite generally, the region $K$ may be much more different from the Kerr-Newman case.
\begin{figure}[H]
\centering
    \includegraphics[width=0.33\textwidth]{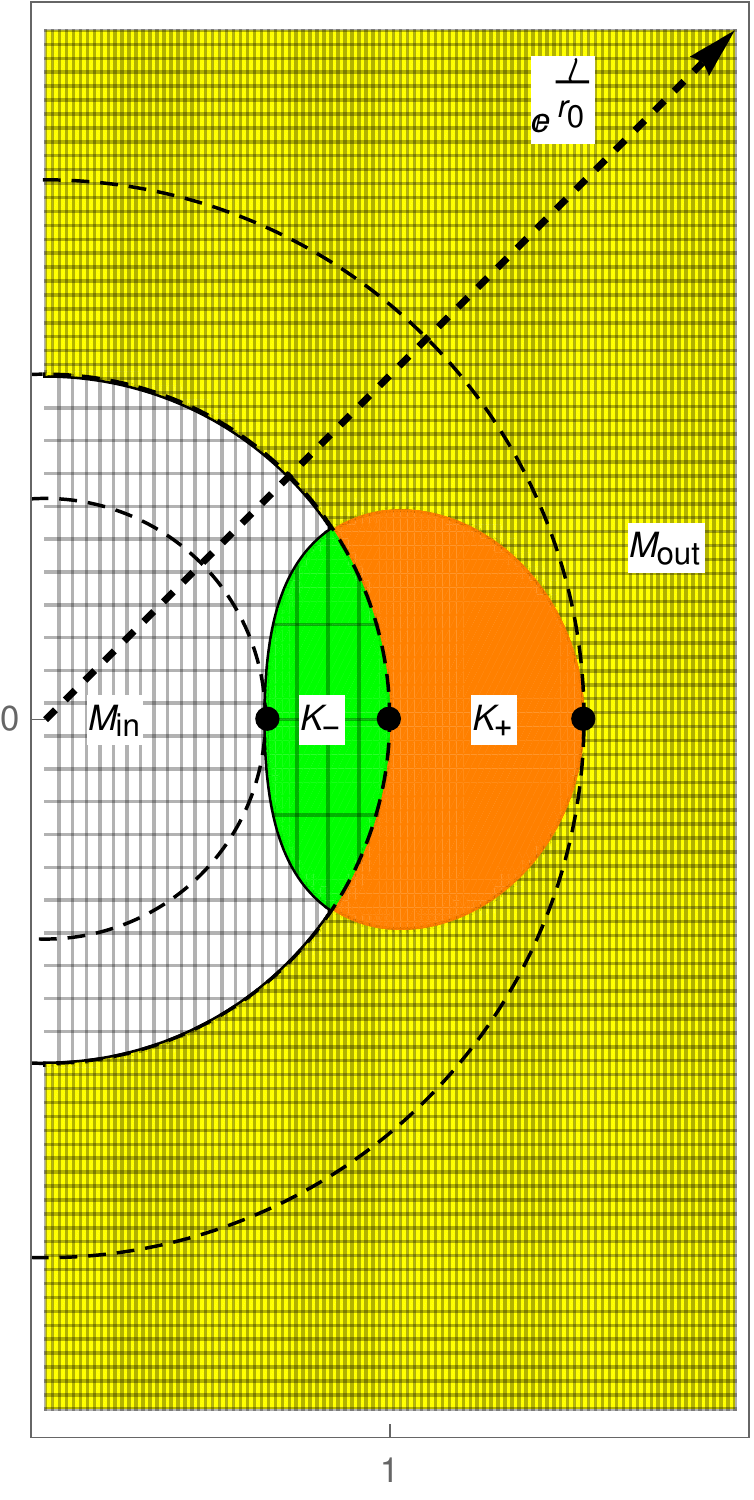}
    \caption{The regions $M_{in}$, $K_{+}$, $K_{-}$ and $M_{out}$ defined in Definition \ref{def:inout} are shown here for the case $a=0.2 r_0^2$, hence $6 a > r_0^2$. Again, we plot $e^{\ell/r_0} \sin\vartheta$ on the horizontal and $e^{\ell/r_0} \cos \vartheta$ on the vertical axis. The boundaries of $K_{+}$ and $K_{-}$  meet the equatorial plane in the photon circles which are indicated by black points.}
\label{fig:Kex1} 
\end{figure}

\section{EXAMPLE 2}\label{sec:ex2}
We give now an example where the regions $K_{+}$ and $K_{-}$ are not connected. Let 
\begin{equation}
\begin{aligned}
N&=1\\
R&=\sqrt{\ell^2+r_0^2},\\
\omega&=\dfrac{1}{r_0} \sin\Big(\dfrac{r_0^4 s}{(\ell^2+r_0^2)^2} \Big)
\end{aligned}
\end{equation}
where $r_0$ is a positive constant with the dimension of a length and $s$ is a dimensionless number. 
Then 
\begin{equation}
\Psi_{\pm} =\mp \dfrac{1}{r_0} \sin \Big(\dfrac{r_0^4\;s}{(\ell^2+r_0^2)^2}\Big)-\dfrac{1}{\sin\vartheta \sqrt{\ell^2+r_0^2}}
\, , 
\end{equation}
hence
\[
d\Psi_{\pm}= \Bigg( 
\pm\dfrac{4 \, r_0^3 \;s\;\ell }{(\ell^2+r_0^2)^3} \, \cos \Big(\dfrac{r_0^4\;s}{(\ell^2+r_0^2)^2} \Big) 
+\dfrac{\ell}{\sin\vartheta(\ell^2+r_0^2)^{3/2}} \Bigg) \;d\ell
\]
\begin{equation}
+\dfrac{\cot\vartheta}{\sin\vartheta \sqrt{\ell^2+r_0^2}}\;d\vartheta \, . 
\end{equation}
From this expression we read that photon circles are located at 
\begin{equation}
\vartheta = \dfrac{\pi}{2} \quad \text{and} \quad
\ell \Bigg( \pm4 \, r_0^3 \;s \, \cos \Big(\dfrac{r_0^4\;s}{(\ell^2+r_0^2)^2} \Big) 
+
(\ell^2+r_0^2)^{3/2} \Bigg) = 0 \, . 
\end{equation}
So there is always one co-rotating and one counter-rotating photon circle at $\ell =0$ in the equatorial plane and, depending on $s$, a certain number of additional photon circles, both co-rotating and counter-rotating, in the equatorial plane whose radius coordinates $\ell$ are given by a transcendental equation.  

In the following plots we use $e^{\ell/r_0}$ as the radial coordinate, as in Example 1. In Figs. \ref{psi2} and \ref{psi3} we show the potentials $\Psi _{\pm}$ for the second example with  $s=2$. In this case we have three co-rotating and five counter-rotating photon circles. Again, all photon circles are stable with respect to latitudinal perturbations. Two of the co-rotating and three of the counter-rotating ones are unstable with respect to radial perturbations (saddle-points of the respective potential), the other ones are stable with respect to radial perturbations (maxima of the respective potential). As before, photon circles that are unstable with respect to radial perturbations correspond to self-intersections of equipotential surfaces which are indicated by dashed (red) curves. By choosing larger values of $s$  we may have as many photon circles as we like.

In Fig. \ref{fig:Kex2} we show the regions $K_+$ and $K_-$ for the second example with $s=2$. In this case $K_+$ and $K_-$ have two connected components each. By choosing a bigger value for $s$ the regions $K_+$ and $K_-$ may have as many connected components as we like. We have proven in the preceding section that any infinite sequence of past-oriented lightlike geodesics from an event $p$ to a timelike curve $\gamma$ converges to a limiting lightlike geodesic that comes arbitrarily close to the region $K$ which is the closure of the union of $K_+$ and $K_-$. What this example demonstrates is the fact that this region need not be connected, and may actually have arbitrarily many connected components that may be far apart from each other. Therefore, our general result does not exclude the case that the limiting lightlike geodesic oscillates forever between two regions that are far apart from each other. This is a major difference in comparison to the spacetime of a Kerr-Newman  black hole where the limiting lightlike geodesic necessarily spirals towards a spherical lightlike geodesic.  

\begin{figure}[H]
  \centering
  \subfloat{\includegraphics[width=0.29\linewidth]{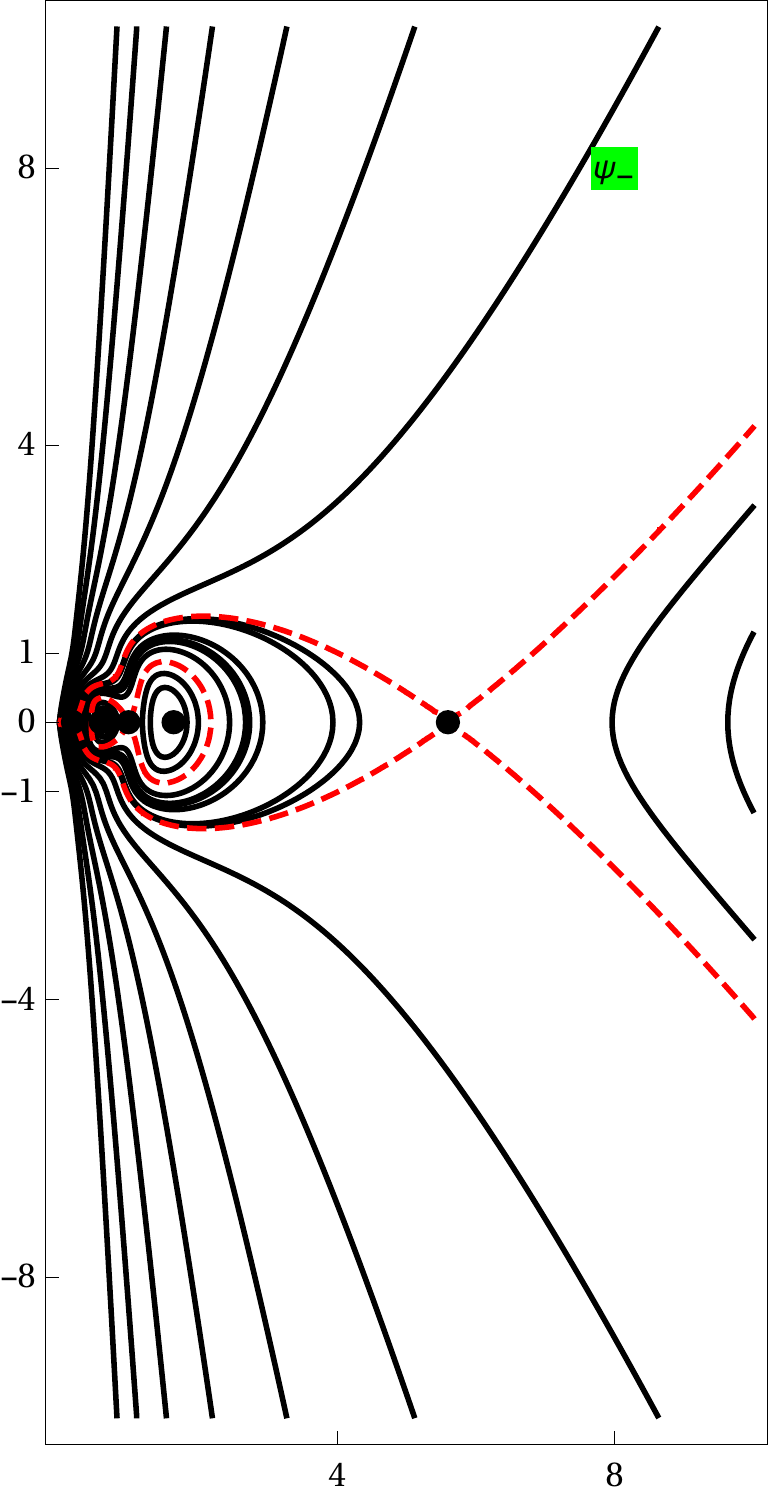}}%
  \qquad
  \subfloat{\includegraphics[width=0.29\linewidth]{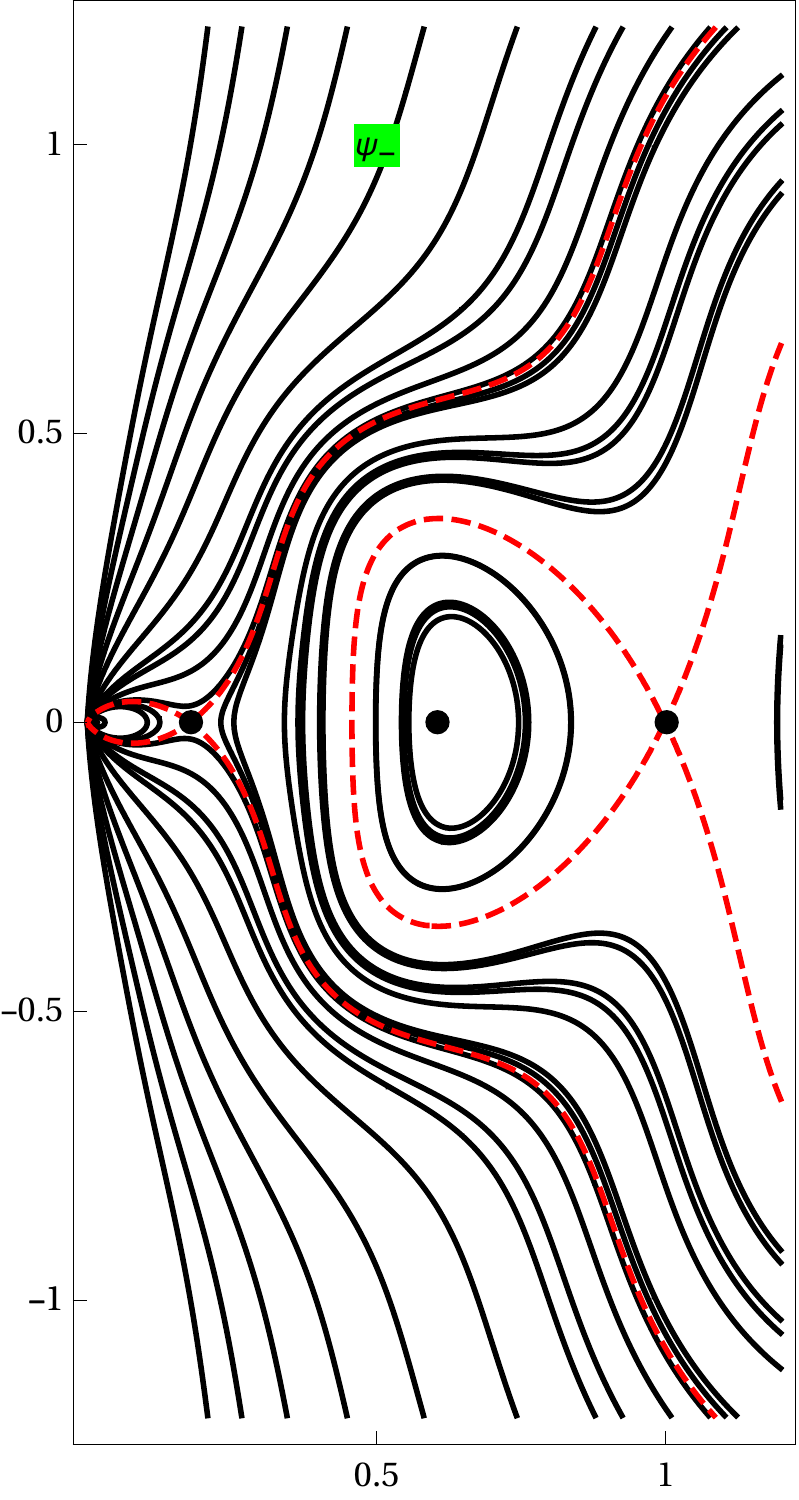}}%
   \caption{The surfaces $\Psi_{-}= \,$constant are drawn here for the case $s=2$. The picture on the right shows an enlarged version of the interior part. The photon circles are indicated by black dots, and the equipotential surface that goes through a photon circle that is unstable with respect to radial perturbations is drawn as a dashed (red) curve.}%
  \label{psi2}
\end{figure}

\begin{figure}[H]
\centering
   \includegraphics[width=0.27\linewidth]{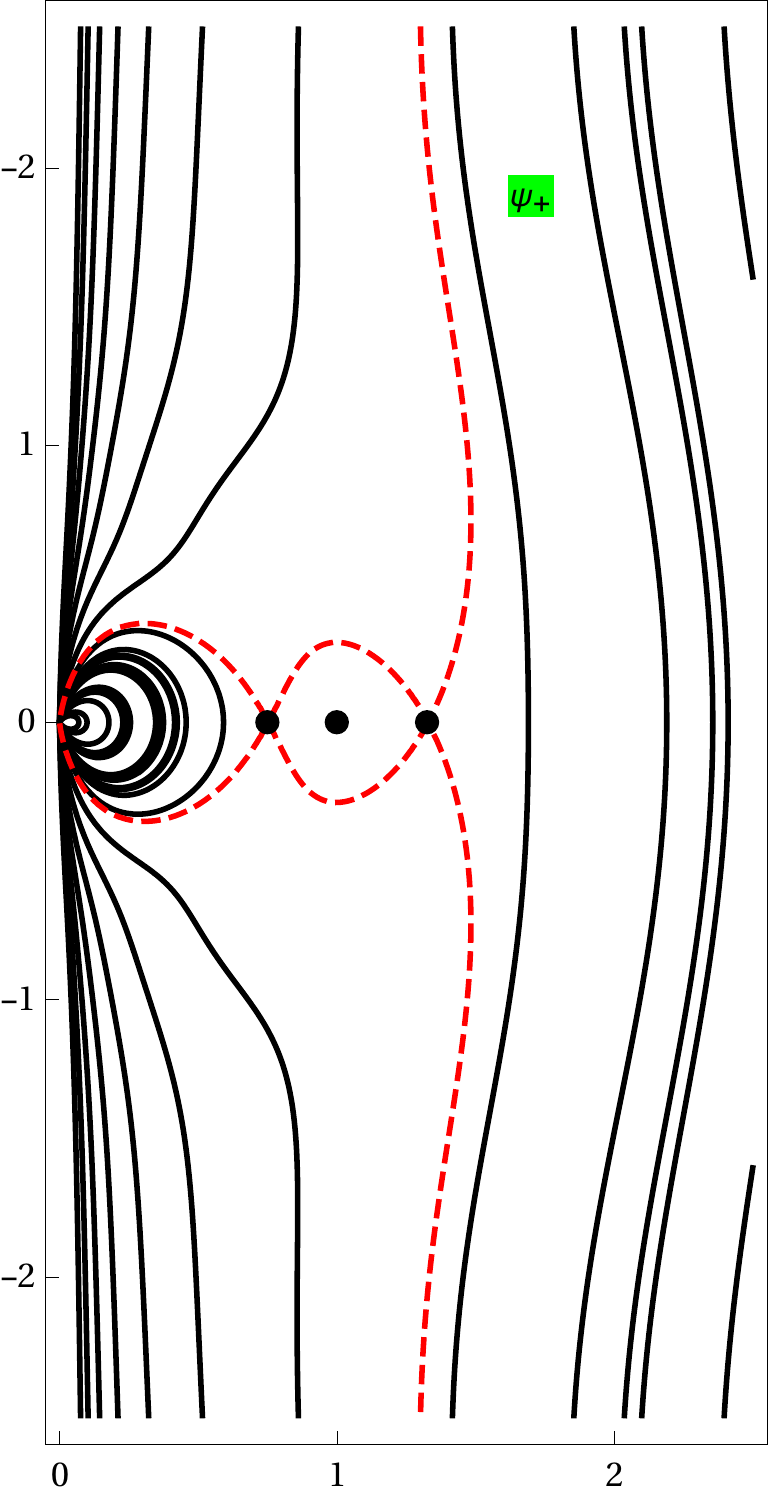}
   \caption{The surfaces $\Psi_{+}= \,$constant are drawn here for the case $s=2$, in analogy to Figure \ref{psi2}.}%
   \label{psi3}
\end{figure}

\begin{figure}[H]
\centering
    \includegraphics[width=0.55\textwidth]{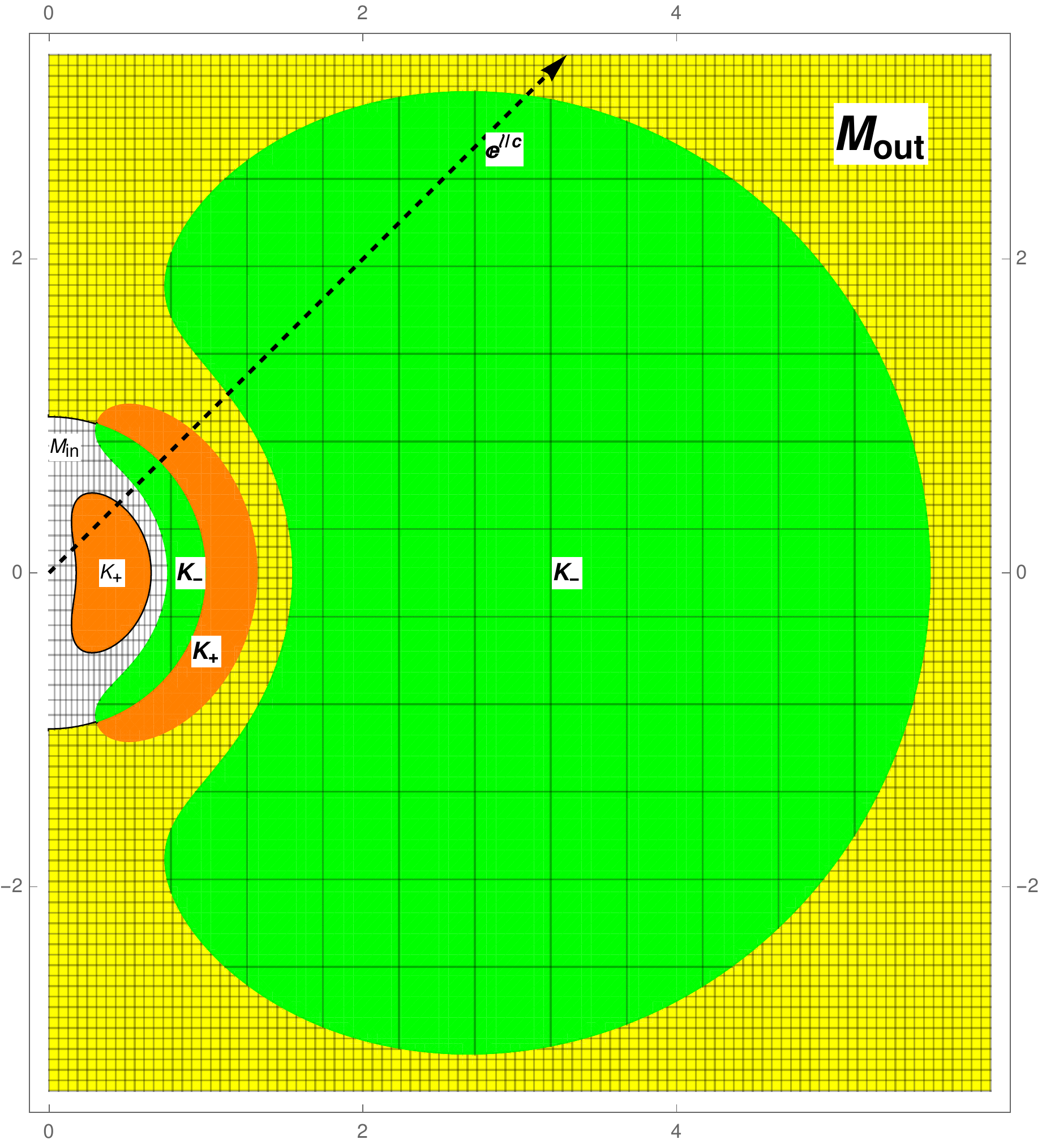}
    \caption{The regions $M_{in}$, $K_{+}$, $K_{-}$ and $M_{out}$ defined in Definition \ref{def:inout} are shown here for the case $s=2$. Again, we plot $e^{\ell/r_0} \sin\vartheta$ on the horizontal and $e^{\ell/r_0} \cos \vartheta$ on the vertical axis. The photon circles are located where the boundaries of $K_+$ and $K_-$ meet the equatorial plane. By choosing $s$ bigger, $K_+$ and $K_-$ may have arbitrarily many connected components; correspondingly, one may have arbitrarily many photon circles.
}
\label{fig:Kex2}
\end{figure}

\section{CONCLUDING REMARKS}\label{sec:conclusion}
In this paper we have considered a class of rotating traversable wormholes and we have proven, with the help of Morse theory, that in these wormhole spacetimes an observer sees infinitely many images of a light source, under very mild restrictions on the motion of the light source. In this respect wormholes are similar to Kerr-Newman black holes (and other black holes). As our Morse-theoretical approach demonstrates, this similarity has its origin in the fact that both the wormhole spacetime and the domain of outer communication of a Kerr-Newman black hole is a globally hyperbolic spacetime with topology $S^2 \times \mathbb{R}^2$ that satisfies the metric growth condition. Moreover, both in the wormhole spacetime and in the domain of outer communication of a Kerr-Newman black hole there are potentials $\Psi_+$ and $\Psi_-$ which tell us where the radial coordinate may have turning points along a lightlike geodesic. However, there are also important differences. In the case of a Kerr-Newman black hole there is a photon region filled with lightlike geodesics each of which stays on a sphere $r = \mathrm{constant}$. If we consider an infinite sequence of lightlike geodesics from an event $p$ to a generic timelike curve $\gamma$, they converge towards a lightlike geodesic $\lambda _{\infty}$ that asymptotically spirals towards one of these ``spherical'' lightlike geodesics that fill the photon region. As shown by our examples, the situation can be much more complicated in the wormhole spacetimes. In general, there are no spherical lightlike geodesics in the wormhole spacetime. The natural generalisation of the photon region, denoted $K$ in this paper, is the closure of two open sets, $K_-$ and $K_+$, each of which may consist of arbitrarily many connected components. The above-mentioned lightlike geodesics $\lambda _{\infty}$ have to come close to the region $K$, as we have proven, but they need not spiral towards a certain limit curve; e.g., they may oscillate between different connected components of $K$ forever. 

As Uhlenbeck's theorem does not require stationarity or
axisymmetry, we expect that the existence of infinitely many images will hold true also for wormholes without any symmetry, as long as global hyperbolicity and the metric growth condition are still satisfied. Other future applications of Uhlenbeck's theorem could be to globally hyperbolic spacetimes with topologies other than $S^2 \times \mathbb{R}^2$. Also, we mention that Giannoni et al \cite{GiannoniEtAl1998} have proven a theorem similar to Uhlenbeck's for lightlike geodesics in spacetimes that need not be globally hyperbolic. Although rather sophisticated, using infinite-dimensional Hilbert manifolds, we believe that the work of Giannoni et al. has the potential of giving very strong and interesting new results on lensing. In a slightly different vein, it should also be possible to establish theorems similar to the ones by Uhlenbeck or Giannoni et al. for light rays other than lightlike geodesics in a general-relativistic spacetime. E.g., one could consider the case of light rays in a plasma on a general-relativistic spacetime, or of lightlike geodesics in a Finsler spacetime. Quite generally, we believe that the potential applications of Morse theory to gravitational lensing are still in the fledgling stages. 
\begin{flushleft}
\textbf{ACKNOWLEDGMENTS}\\
\end{flushleft}
We gratefully acknowledge support from the DFG within the Research Training Group 1620 “Models of Gravity”.
\bibliographystyle{spphys}

\end{document}